\title{On the Spectral Properties of Symmetric Functions}
\author{
Anil Ada\footnote{Department of Computer Science, Carnegie Mellon University. Email: {\small {\tt aada@cs.cmu.edu}.} }   \and  
Omar Fawzi\footnote{LIP, \'Ecole Normale Sup\'erieure de Lyon. Email: {\small {\tt omar.fawzi@ens-lyon.fr}.} } \and
Raghav Kulkarni\footnote{Chennai Mathematical Institute. Email: {\small {\tt kulraghav@gmail.com}.} } 
} 
\begin{document}

\maketitle

\begin{abstract}
We characterize the approximate monomial complexity, sign monomial complexity, and the approximate $L_1$ norm of symmetric functions in terms of simple combinatorial measures of the functions. Our characterization of the approximate $L_1$ norm solves the main conjecture in \cite{AFH12}. As an application of the characterization of the sign monomial complexity, we prove a conjecture in \cite{SZ09} and provide a characterization for the unbounded-error communication complexity of symmetric-xor functions.
\end{abstract}

\section{Introduction}

Understanding the structure and complexity of Boolean functions $f:\{0,1\}^n \to \{0,1\}$ is a main goal in computational complexity theory. Fourier analysis of Boolean functions provide many useful tools in this study. Natural Fourier analytic properties of a Boolean function can be linked to the computational complexity of the function in various settings like circuit complexity, communication complexity, decision tree complexity, learning theory, etc.

In this paper, our focus is on trying to understand the Fourier analytic (i.e. spectral) properties of \emph{symmetric} functions, which are Boolean functions such that permuting the input bits does not change the output. Many basic and fundamental functions like $\f{and}, \f{or}, \f{majority}, \f{mod}_m$ are symmetric, and having a full understanding of the spectral properties of symmetric functions is a natural goal.

Some of the important spectral properties of Boolean functions are the degree (the largest degree of a monomial with non-zero Fourier coefficient), the monomial complexity (the number of non-zero Fourier coefficients), and the Fourier $L_p$ norms. Often, the degree or the monomial complexity of a Boolean function does not give us useful information, so we study approximate versions like $\eps$-approximate degree (the minimum degree of a polynomial that point-wise approximates the function) and sign degree (the minimum degree of a polynomial that sign represents the function). These measures have found numerous applications in computational complexity theory. 

Some earlier results on the spectral properties of symmetric functions include the characterization of sign degree \cite{ABFR94}, approximate degree \cite{Pat92}, and Fourier $L_1$ norm \cite{AFH12}. 

Our main results are as follows.
\begin{itemize}
    \item Theorem~\ref{thm:main1}: characterization of approximate monomial complexity of symmetric functions.
    \item Theorem~\ref{thm:main2}: characterization of sign monomial complexity of symmetric functions.
    \item Corollary~\ref{cor:infinity-norm}: a lower bound on the $L_\infty$ norm of symmetric functions.
    \item Theorem~\ref{thm:main4}: characterization of approximate $L_1$ norm of symmetric functions. This solves the main conjecture of \cite{AFH12}.
\end{itemize}

Our results have the following applications in communication complexity. 
\begin{itemize}
    \item Theorem~\ref{thm:main3}: characterization of the unbounded-error communication complexity of symmetric-xor functions. This resolves a conjecture of \cite{SZ09}. This result was obtained independently by Hatami and Qiang \cite{HQ17}.
    \item Theorem~\ref{thm:main5}: verifying the log approximation rank conjecture for symmetric-xor functions.
\end{itemize}

To prove these results, we make use of (i) the close connections between Boolean functions and their corresponding two-party xor functions (Proposition~\ref{prop:ftoF}), and (ii) the known bounds on the approximate rank and the sign rank of two-party symmetric-and functions (Theorem~\ref{thm:raz03} and Theorem~\ref{thm:she12}). We transform these results on two-party symmetric-and functions to the setting of symmetric xor-functions via reductions. 

\section{Preliminaries}

\subsection*{General notation}

We use $[n]$ to denote the set $\{1,2,\ldots, n\}$. All the logarithms are base 2. For $x \in \{0,1\}^n$, $|x|$ denotes the Hamming weight of $x$, i.e., $|x| = \sum_i x_i$. For $b \in \{0,1\}$, $\neg b$ denotes negation of $b$. Given $x$ and $y$ in $\{0,1\}^n$, $x \wedge y$ denotes the $n$-bit string obtained by taking the coordinate-wise \emph{and} of $x$ and $y$. Similarly, $x \xor y$ denotes the $n$-bit string obtained by taking the coordinate-wise \emph{xor} of $x$ and $y$.

A Boolean function $f:\{0,1\}^n \to \{0,1\}$ is called \emph{symmetric} if the function's output does not change when we permute the input variables. When $f$ is symmetric, we'll use $f$ to also denote the corresponding function $f:\{0,1,\ldots,n\} \to \{0,1\}$ with the understanding that $f(|x|) = f(x)$. We define $r_0(f)$ and $r_1(f)$ 
\begin{align*}
r_0(f) &\defeq \min \{ r \leq \ceil{n/2} : f(i) = f(i+2) \text{ for all } i \in [r, \ceil{n/2}-1] \} \\
r_1(f) &\defeq \min \{ r \leq \floor{n/2} - 1 : f(i) = f(i+2) \text{ for all } i \in [\ceil{n/2}, n - r - 2] \}
\end{align*}
Note that we have $f(i) = f(i+2)$ for all $i \in [r_0(f), n-r_1(f)-2]$. Then $r(f) = \max\{r_0(f),r_1(f)\}$. 
%We let $r_0(f)$ and $r_1(f)$ the minimum integers less than $n/2$ such that $f(i) = f(i+2)$ for $i \in [r_0, n-r_1-2]$. Then $r(f)$ is defined as $\max\{r_0(f), r_1(f)\}$. 
Also, we let 
\[
    \lambda(f) \defeq |\{i : f(i) \neq f(i+1)\}|,
\] 
and 
\[
    \rho(f) \defeq |\{i : f(i) \neq f(i+2)\}|.
\]
When the function is clear from the context, we may drop the $f$ from this notation.

\subsection*{Fourier analysis}

Let $f:\{0,1\}^n \to \{0,1\}$ be a Boolean function. We view $f$ as residing in the $2^n$-dimensional vector space of real-valued functions $\{\phi:\{0,1\}^n \to \R\}$. We equip this vector space with the inner product $\ip{\phi, \phi'} \defeq \exd{}{\phi(x)\phi'(x)}$, where $x$ is uniformly distributed over $\{0,1\}^n$. For each $S \subseteq [n]$, define the function
\[
    \chi_S(x) \defeq (-1)^{\sum_{i \in S} x_i}.
\]
We refer to these functions as \emph{characters} or \emph{monomials}. It is easy to check that the set $\{\chi_S : S \subseteq [n]\}$ forms an orthonormal basis. Therefore every function $\phi$ (including every Boolean function) can be written as $\sum_{S \subseteq [n]} \fc{\phi}(S) \chi_S$, where $\fc{\phi}(S) = \ip{\phi, \chi_S}$ are the real-valued coefficients, called the \emph{Fourier coefficients}. This way of expanding $\phi$ is called the \emph{Fourier expansion} of $\phi$. 

The \emph{degree} of a function $\phi$ is defined as $\deg(\phi) \defeq \max\{|S|: \fc{\phi}(S) \neq 0\}$ and the \emph{monomial complexity} is defined as $\mon(\phi) \defeq |\{S: \fc{\phi}(S) \neq 0\}|$. We also define the Fourier $p$-norm: 
\[
    \norm{\fc{\phi}}_p \defeq \left( \sum_S |\fc{\phi}(S)|^p \right)^{1/p}.
\]
The Fourier infinity norm is defined to be $\norm{\fc{\phi}}_\infty = \max_S |\fc{\phi}(S)|$.

For symmetric functions, \cite{AFH12} characterized the Fourier $1$-norm in terms of $r(f)$.
\begin{theorem}[\cite{AFH12}] \label{thm:afh12}
Let $f:\{0,1\}^n \to \{0,1\}$ be a symmetric function. When $r(f) > 1$, we have
\[
    \log \norm{\fc{f}}_1 = \Theta\left(r(f) \log\left(\frac{n}{r(f)}\right)\right) \ .
\]
\end{theorem}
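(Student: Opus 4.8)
\textbf{The plan} is to prove $\log\norm{\fc f}_1 = O\!\left(r\log(n/r)\right)$ and $\log\norm{\fc f}_1 = \Omega\!\left(r\log(n/r)\right)$ separately, where $r = r(f) = \max\{r_0(f),r_1(f)\}$.

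\emph{Upper bound.} Since $r\ge r_0(f)$ and $r\ge r_1(f)$, the value $f(i)$ depends only on the parity of $i$ throughout the range $i\in[r,n-r]$. I would therefore write $f = P + g$ over $\R$, where $P$ is the symmetric function whose value depends only on $|x|\bmod 2$ and which agrees with $f$ on $[r,n-r]$. Then $P$ is an affine combination of the single character $(-1)^{|x|}$, so $\norm{\fc P}_1\le 1$; and $g = f-P$ is $\{-1,0,1\}$-valued and supported on Hamming weights in $[0,r-1]\cup[n-r+1,n]$. Writing the two ends of $g$ as $\R$-combinations of the slice indicators $E_m = \mathbf{1}[|x|=m]$ with coefficients in $\{-1,0,1\}$, and using that $E_m$ is a sum of $\binom nm$ point indicators, each of spectral norm $1$, so that $\norm{\fc{E_m}}_1\le\binom nm$, one gets $\norm{\fc f}_1\le 1 + 2(r+1)\binom nr = 2^{O(r\log(n/r))}$.

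\emph{Lower bound.} I would use two stability properties of the spectral norm: it is unchanged by negating all input coordinates (this multiplies $\fc f(S)$ by $(-1)^{|S|}$) and it does not increase under restrictions to subcubes. By the first, assume WLOG $r_0(f) = r$; minimality of $r_0$ then forces $f(r-1)\ne f(r+1)$ while $f$ stays $2$-periodic above level $r$. Restricting the last $r-1$ coordinates to $0$ yields a symmetric $f'$ on $N := n-r+1\ge n/2$ variables that still satisfies $f'(r-1)\ne f'(r+1)$ and is $2$-periodic on $[r,n-r]$. Decomposing $f' = P' + g'$ exactly as above, $g'$ is $\{-1,0,1\}$-valued, supported on weights $[0,r-1]\cup\{N\}$, with $|g'(r-1)|=1$; its level-$N$ part contributes spectral norm at most $\norm{\fc{E_N}}_1=1$. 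Hence $\norm{\fc f}_1\ge\norm{\fc{f'}}_1\ge\norm{\fc h}_1 - 2$, where $h := g'|_{[0,r-1]}$ is symmetric, $\{-1,0,1\}$-valued on $N$ variables, supported on weights $\le r-1$, with $|h(r-1)|=1$. So everything reduces to the estimate: \emph{if $h$ is symmetric, $\{-1,0,1\}$-valued on $N$ variables, supported on weights $\le m$ with $h(m)\ne 0$, then $\norm{\fc h}_1 = 2^{\Omega(m\log(N/m))}$}, applied with $m=r-1$ --- note $(r-1)\log(N/(r-1)) = \Theta(r\log(n/r))$ when $r\ge 2$, while $r=1$ is precisely the case the theorem excludes.

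\emph{Proof of the key estimate, and the main obstacle.} Since $h$ is symmetric, $\fc h(S) = 2^{-N}q(|S|)$ with $q(k) = \sum_{j\le m}h(j)K_j(k)$, where $K_j$ is the Krawtchouk polynomial (so $\fc{E_j}(S) = 2^{-N}K_j(|S|)$); hence $\norm{\fc h}_1 = \mathbb{E}_x|q(|x|)|$ for $x$ uniform. Orthogonality of the $K_j$ under the binomial weights gives $\mathbb{E}_x[q(|x|)^2] = \sum_{j\le m}h(j)^2\binom Nj\ge\binom Nm$. For the fourth moment, expanding $q(|x|)^4$ into a nonnegative collision count and using $|h(j)|\le 1$ gives $\mathbb{E}_x[q(|x|)^4]\le\mathbb{E}_x[(\sum_{j\le m}K_j(|x|))^4]$, which a short combinatorial count identifies, up to a factor polynomial in $m$, with $\sum_{t}\binom N{2t}\binom{2t}{t}^2\binom{N-2t}{m-t}^2$; elementary binomial estimates bound this by $e^{O(m)}\binom Nm^2$ as long as $m\le N/C$ for a suitable constant $C$. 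Log-convexity of moments, $(\mathbb{E}|q|^2)^3\le(\mathbb{E}|q|)^2\,\mathbb{E}|q|^4$, then yields $\norm{\fc h}_1 = \mathbb{E}|q|\ge\binom Nm^{3/2}(\mathbb{E}|q|^4)^{-1/2}\ge\binom Nm^{1/2}e^{-O(m)} = 2^{\frac12\log\binom Nm - O(m)} = 2^{\Omega(m\log(N/m))}$. For $m$ in the remaining range $[N/C,N/2]$, where $m\log(N/m)=\Theta(N)$, I would first restrict $h$ by fixing an appropriate number of coordinates to $1$, reducing to a symmetric $\{-1,0,1\}$-valued function on $N'=\Theta(N)$ variables, supported on weights $\le N'/C$ and still nonzero at its top level, and invoke the case just done. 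I expect the real work to lie in the fourth-moment (equivalently, Krawtchouk-tail) estimate $\mathbb{E}_x[K_{|x|}(m)^4]\le e^{O(m)}\binom Nm^2$: this is exactly what upgrades the exponent from a trivial $\Omega(\log N)$ to the correct $\Omega(m\log(N/m))$, since crude bounds like $\norm{\fc h}_1\ge\mathbb{E}[q^2]/\max_k|q(k)|$ only give $\Omega(1)$ because $|q|$ spikes near $|x|\in\{0,N\}$; a secondary nuisance is the bookkeeping in the reduction --- peeling off the high-weight behavior of $f$ while tracking the lone defect guaranteed by $r_0(f)=r$ and absorbing the additive and multiplicative losses --- which is where the hypothesis $r(f)>1$ is genuinely used.
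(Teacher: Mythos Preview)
The paper does not contain a proof of this theorem: it is quoted as a prior result from \cite{AFH12} and used only as a black box (in the upper bound of Theorem~\ref{thm:main1} and the upper bound of Theorem~\ref{thm:main4}). So there is no ``paper's own proof'' to compare your proposal against.

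That said, a few comments on your self-contained attempt. Your upper bound argument (subtract the global parity pattern, bound the residual by $\sum_{m<r}\binom{n}{m}+\sum_{m>n-r}\binom{n}{m}$) is correct and is essentially the standard one. Your lower bound reduction---flipping to assume $r=r_0$, restricting $r-1$ coordinates to $0$, peeling off the parity part and the single top-weight slice to isolate a $\{-1,0,1\}$-valued $h$ supported on weights $\le r-1$ with $h(r-1)\neq 0$---is clean and correct; the $-2$ in $\norm{\fc f}_1\ge\norm{\fc h}_1-2$ is properly accounted for. Your second-moment computation $\mathbb{E}_S[q(|S|)^2]\ge\binom{N}{m}$ via Krawtchouk orthogonality is also correct, as is the passage from $\mathbb{E}[q^4]$ to the additive-quadruple count in the Hamming ball $B_m$ via $|h|\le 1$.

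The genuine gap is exactly where you flag it: the fourth-moment bound $\mathbb{E}_S\bigl[(\sum_{j\le m}K_j(|S|))^4\bigr]\le e^{O(m)}\binom{N}{m}^2$, equivalently $|\{(x_1,x_2,x_3,x_4)\in B_m^4:x_1\oplus x_2\oplus x_3\oplus x_4=0\}|\le e^{O(m)}|B_m|^2$. Your one-line ``short combinatorial count'' and the displayed formula $\sum_t\binom{N}{2t}\binom{2t}{t}^2\binom{N-2t}{m-t}^2$ do not obviously give this; a naive bound yields $|B_m|^3$, which is too weak, and getting the exponent down to $2+o(1)$ requires a real argument about how ball-intersections $|B_m\cap(y\oplus B_m)|$ decay with $|y|$ on average. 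This is exactly the step that carries the full strength of the lower bound, and as written it is asserted rather than proved. The reduction for the range $m\in[N/C,N/2]$ is also only sketched. So the proposal is a plausible outline, with the decisive estimate left open.
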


\subsection*{Matrix analysis}

Let $M \in \R^{k \times k}$ be a real-valued matrix, with singular values $\sigma_1, \sigma_2, \ldots, \sigma_k \geq 0$. The rank of $M$, denoted $\rank(M)$, is the number of non-zero singular values. The Schatten $p$-norm is defined as follows:
\begin{align*}
    \norm{M}_p  & \defeq \left( \sum_{i = 1}^k \sigma_i^p \right)^{1/p}, \\
    \norm{M}_\infty & \defeq \sigma_1.
\end{align*}
We then define 
\begin{alignat*}{3}
    \text{trace norm:} & \quad \norm{M}_{\tr} &\; \defeq \; & \norm{M}_1\\ 
    \text{Frobenius norm:} & \quad \norm{M}_{\Fr} & \; \defeq \; &  \norm{M}_2\\ 
    \text{spectral norm:} & \quad \norm{M} & \; \defeq \; & \norm{M}_\infty 
\end{alignat*}

Given two matrices $M$ and $M'$, we write $M = M'$ if one can be obtained from the other after reordering the rows and/or the columns.

\subsection*{Approximation theory}

Throughout the paper, $\eps$ denotes any constant in $[0, 1/2)$. Given $f:\{0,1\}^n \to \{0,1\}$, we say that $\phi:\{0,1\}^n \to \mathbb{R}$ $\eps$-approximates $f$ if for all $x \in \{0,1\}^n$, $|\phi(x) - f(x)| \leq \eps$. Then the $\eps$-approximate monomial complexity of $f$, denoted by $\mon_\eps(f)$, is defined as the minimum monomial complexity of a function that $\eps$-approximates $f$. Similarly we define $\norm{\fc{f}}_{1,\eps}$. For a matrix $M$, $\rank_\eps(M)$ and $\norm{M}_{\tr,\eps}$ are defined as the minimum rank and the minimum trace norm respectively, of a matrix that $\eps$-approximates $M$ entry-wise.

Given $f:\{0,1\}^n \to \{0,1\}$, we say that $\phi:\{0,1\}^n \to \mathbb{R}$ sign-represents $f$ if for all $x$ such that $f(x) = 1$, $\phi(x) > 0$, and for all $x$ such that $f(x) = 0$, $\phi(x) < 0$. The sign monomial complexity of $f$, denoted $\signmon(f)$, is defined to be the minimum monomial complexity of a function $\phi$ that sign represents $f$. For a matrix $M$ with entries in $\{0,1\}$, we similarly define $\signrank(M)$.

The following proposition provides a relationship between the approximate trace norm and the approximate rank:
\begin{proposition}[Folklore] \label{prop:tracetorank}
Let $M \in \{0,1\}^{k \times k}$. Then,
\[
    \rank_\eps(M) \geq  \left( \frac{\norm{M}_{\tr,\eps}}{k (1+\eps)}\right)^2.
\]
\end{proposition}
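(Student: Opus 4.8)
The plan is to go through the standard chain of inequalities relating the trace norm, the Frobenius norm, and the rank, applied to the optimal low-rank approximant. Concretely, let $N \in \mathbb{R}^{k \times k}$ be a matrix that $\eps$-approximates $M$ entry-wise and achieves $\rank(N) = \rank_\eps(M) =: r$. I would exploit two elementary facts about the Schatten norms of $N$: first, that a rank-$r$ matrix has at most $r$ nonzero singular values, so by Cauchy--Schwarz $\norm{N}_{\tr} = \sum_{i=1}^r \sigma_i(N) \leq \sqrt{r}\,\bigl(\sum_{i=1}^r \sigma_i(N)^2\bigr)^{1/2} = \sqrt{r}\,\norm{N}_{\Fr}$; and second, that the Frobenius norm is just the entry-wise $\ell_2$ norm, so $\norm{N}_{\Fr} \leq k \cdot \max_{i,j} |N_{ij}|$.

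Next I would bound the entries of $N$. Since $M$ has $\{0,1\}$ entries and $N$ approximates $M$ to within $\eps$ entry-wise, every entry satisfies $|N_{ij}| \leq |M_{ij}| + \eps \leq 1 + \eps$, hence $\norm{N}_{\Fr} \leq k(1+\eps)$. Combining with the previous paragraph gives $\norm{N}_{\tr} \leq \sqrt{r}\, k (1+\eps)$.

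Finally, I would invoke the definition of the approximate trace norm: because $N$ is itself an $\eps$-approximation of $M$, we have $\norm{M}_{\tr,\eps} \leq \norm{N}_{\tr}$. Chaining the inequalities yields $\norm{M}_{\tr,\eps} \leq \sqrt{r}\, k(1+\eps)$, and solving for $r = \rank_\eps(M)$ gives exactly $\rank_\eps(M) \geq \bigl(\norm{M}_{\tr,\eps} / (k(1+\eps))\bigr)^2$.

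There is no real obstacle here — every step is a one-line consequence of Cauchy--Schwarz, the definition of the Frobenius norm, or the definitions of the approximate norms — which is presumably why the statement is labeled folklore. The only point requiring a modicum of care is making sure the matrix $N$ used in the argument is simultaneously an $\eps$-approximant (so that the entry bound and the trace-norm inequality both apply) and rank-optimal (so that $\sqrt{\rank(N)} = \sqrt{\rank_\eps(M)}$ appears); taking $N$ to be the minimizer in the definition of $\rank_\eps(M)$ handles this cleanly.
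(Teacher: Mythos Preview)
Your argument is correct and matches the paper's proof essentially line for line: both pick a rank-optimal $\eps$-approximant, apply Cauchy--Schwarz to bound the trace norm by $\sqrt{\rank}\cdot\norm{\cdot}_{\Fr}$, bound the Frobenius norm by $k(1+\eps)$ using the entry-wise constraint, and then use that the approximant witnesses $\norm{M}_{\tr,\eps}$ from above.
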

\begin{proof}
Let $M'$ be a matrix that entry-wise $\eps$-approximates $M$, and $\rank(M') = \rank_\eps(M)$. Then
\[
    \norm{M}_{\tr, \eps} \leq \norm{M'}_{\tr} \overset{(*)}{\leq} \norm{M'}_{\Fr} \sqrt{\rank(M')} \leq k(1+\eps) \sqrt{\rank(M')} = k(1+\eps) \sqrt{\rank_\eps(M)},
\]
where we used the Cauchy-Schwarz inequality for $(*)$. 
\end{proof}

Bruck and Smolensky \cite{BS92} provided an upper bound on the sign monomial complexity of a Boolean function in terms of its Fourier $1$-norm. In fact, their proof gives an upper bound on the approximate monomial complexity too. 
\begin{theorem}[\cite{BS92}] \label{thm:bs92}
For any $f:\{0,1\}^n \to \{0,1\}$,
\[
    \mon_\eps (f) \leq \frac{4n}{\eps^2} \norm{\fc{f}}_1^2.
\]
\end{theorem}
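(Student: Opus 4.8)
The plan is to reprove this by the standard random-sampling (probabilistic) argument. Set $L \defeq \norm{\fc{f}}_1$ and write the Fourier expansion $f = \sum_{S \subseteq [n]} \fc{f}(S)\,\chi_S$. We may assume $f \not\equiv 0$ (otherwise $\mon_\eps(f) = 0$ and there is nothing to prove) and $\eps > 0$ (otherwise the right-hand side is infinite). View $|\fc{f}(S)|/L$ as a probability distribution on the subsets $S \subseteq [n]$, draw $S_1, \dots, S_t$ independently from it, record the signs $\sigma_i \defeq \mathrm{sign}\big(\fc{f}(S_i)\big) \in \{-1,+1\}$, and consider the random polynomial
\[
    g \;\defeq\; \frac{L}{t}\sum_{i=1}^{t} \sigma_i\,\chi_{S_i}.
\]
The normalization is designed so that, for every fixed $x \in \{0,1\}^n$, each summand $L\sigma_i\chi_{S_i}(x)$ has expectation $\sum_{S}\frac{|\fc{f}(S)|}{L}\cdot L\,\mathrm{sign}(\fc{f}(S))\,\chi_S(x) = \sum_S \fc{f}(S)\,\chi_S(x) = f(x)$, and hence $\mathbb{E}[g(x)] = f(x)$.

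Next, fix $x$ and note that $g(x)$ is an average of $t$ i.i.d.\ random variables, each lying in $[-L, L]$ and with mean $f(x)$; Hoeffding's inequality therefore gives $\Pr\big[\,|g(x) - f(x)| > \eps\,\big] \le 2\exp\!\big(-t\eps^2/(2L^2)\big)$. Taking a union bound over the $2^n$ inputs, the probability that $g$ fails to $\eps$-approximate $f$ at some point is at most $2^{n+1}\exp\!\big(-t\eps^2/(2L^2)\big)$, which is $<1$ as soon as $t > 2L^2(n+1)(\ln 2)/\eps^2$. For such an integer $t$ there is then a realization of $S_1, \dots, S_t$ for which $g$ $\eps$-approximates $f$ everywhere, and since $g$ is a linear combination of at most $t$ distinct monomials (coinciding $S_i$'s only lower the count) we get $\mon_\eps(f) \le t$. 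Finally, the least admissible integer $t$ is at most $2L^2(n+1)(\ln 2)/\eps^2 + 1$; using $1/\eps^2 > 4$ together with the elementary fact that $\norm{\fc{f}}_1 \ge 1$ for every non-zero Boolean function (evaluate the Fourier expansion at a point where $f = 1$), a one-line computation shows this is at most $4nL^2/\eps^2$ for all $n \ge 1$, which is the claimed bound.

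The argument presents no real obstacle; it is short and self-contained once Hoeffding's inequality is granted, and note that a mere second-moment bound would not suffice since Chebyshev's inequality cannot beat the $2^n$ union bound. The three points worth a moment's care are: (i) feeding the tight range bound $[-L,L]$ into Hoeffding, since a cruder bound would inflate the constant beyond $4$; (ii) absorbing the $+1$ coming from rounding $t$ up to an integer, which is exactly what the observation $\norm{\fc{f}}_1 \ge 1$ is for; and (iii) recalling that we only need the failure probability to be strictly below $1$, so no sharper concentration estimate or explicit construction of the sample is needed. The degenerate cases $f \equiv 0$ and $\eps = 0$ are disposed of at the very start.
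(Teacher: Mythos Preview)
The paper does not supply its own proof of this theorem; it is quoted from \cite{BS92} and used as a black box. Your argument is precisely the standard Bruck--Smolensky random-sampling proof (sample characters according to $|\fc{f}(S)|/\norm{\fc{f}}_1$, average, apply Hoeffding and a union bound), and it is correct as written, including the constant-chasing at the end: with $A \defeq \norm{\fc{f}}_1^2/\eps^2 > 4$ one indeed gets $2(n+1)(\ln 2)A + 1 \le 4nA$ for every $n \ge 1$. So there is nothing to compare against, and nothing to fix.
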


Bruck \cite{Bru90} gave a lower bound on the sign monomial complexity of a Boolean function in terms of the Fourier infinity norm of $f$.
\begin{theorem}[\cite{Bru90}] \label{thm:bru90}
Let $f:\{0,1\}^n \to \{0,1\}$ be a Boolean function. Then
\[
    \signmon(f) \geq \frac{1}{\norm{\fc{f}}}_\infty.
\]
\end{theorem}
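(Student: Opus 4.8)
The proof I have in mind is a short Fourier double-counting argument; here is the plan. First I would pass to the $\pm 1$-valued encoding of the output: set $g \defeq (-1)^{f} : \{0,1\}^n \to \{-1,1\}$, equivalently $g = 1 - 2f$. A real polynomial $\phi$ sign-represents $f$ (in the strict sense of the definition) exactly when $-\phi$ has the same sign as $g$ at every point, and negating a polynomial does not change its number of monomials; hence $\signmon(f)$ is the least $m$ for which there is a polynomial $p = \sum_{S \in \mathcal{S}} c_S \chi_S$ with $|\mathcal{S}| = m$ and $\mathrm{sign}(p(x)) = g(x)$ for every $x \in \{0,1\}^n$. Because the sign conditions are strict, $p$ vanishes nowhere; in particular $p \not\equiv 0$, so $C \defeq \max_{S} |c_S| > 0$, and $g(x)\, p(x) = |p(x)|$ for all $x$.

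The crux is to bound $\exd{}{g(x)\, p(x)} = \exd{}{|p(x)|}$ from above and below. Expanding $p$ and using the definition of the Fourier coefficients of $g$,
\[
    \exd{}{|p(x)|} \;=\; \sum_{S \in \mathcal{S}} c_S\, \fc{g}(S) \;\le\; \Big(\max_{S}|c_S|\Big)\sum_{S \in \mathcal{S}} |\fc{g}(S)| \;\le\; m\, C\, \norm{\fc{g}}_\infty,
\]
since the sum has only $m$ terms. On the other hand, fixing $S^\star$ with $|c_{S^\star}| = C$ and using $|\chi_{S^\star}(x)| = 1$ together with the triangle inequality for $\exd{}{\,\cdot\,}$,
\[
    \exd{}{|p(x)|} \;\ge\; \bigl|\, \exd{}{p(x)\,\chi_{S^\star}(x)} \,\bigr| \;=\; |c_{S^\star}| \;=\; C.
\]
Combining the two displays gives $C \le m\, C\, \norm{\fc{g}}_\infty$; dividing by $C > 0$ yields $m \ge 1/\norm{\fc{g}}_\infty$, that is, $\signmon(f) \ge 1/\norm{\fc{g}}_\infty$. (Here $\fc{g}$ is the Fourier transform of the $\pm 1$-valued version of $f$; note $\fc{g}(S) = -2\fc{f}(S)$ for $S \neq \emptyset$, so the estimate is most naturally phrased in this encoding.)

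There is no deep obstacle in this argument — it is only a few lines — but the one choice that is not automatic, and the only thing I would be careful about, is the normalization: one must normalize $p$ by the \emph{largest} of the $|c_S|$. This $\ell_\infty$ choice is exactly what makes the two estimates meet: the ``global'' bound controls $\exd{}{|p(x)|}$ by $m$ copies of the largest Fourier coefficient of $g$ (scaled by $C$), while the ``local'' bound shows $\exd{}{|p(x)|}$ is at least $C$, obtained by projecting $p$ onto a single dominant character. Normalizing by the $\ell_1$ norm of the coefficient vector instead would only reproduce a lower bound on $\norm{\fc{g}}_1$, which is genuinely weaker; and normalizing by the $\ell_2$ norm would force a comparison of $\exd{}{|p(x)|}$ with $\norm{p}_2$ that fails in the needed direction, since by Cauchy--Schwarz that inequality always runs the other way.
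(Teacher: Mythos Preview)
The paper does not actually prove this statement --- it is quoted from \cite{Bru90} without argument --- so there is no in-paper proof to compare against. Your argument is correct and is essentially Bruck's original one: write $\exd{}{|p|} = \exd{}{gp} = \sum_{S\in\mathcal S} c_S\,\fc{g}(S) \le mC\norm{\fc{g}}_\infty$, and bound the same quantity below by projecting onto the dominant character, $\exd{}{|p|} \ge |\fc{p}(S^\star)| = C$. You are also right to flag the encoding issue. As literally stated, with $f:\{0,1\}^n\to\{0,1\}$ and $\norm{\fc{f}}_\infty$ the Fourier infinity norm of the $\{0,1\}$-valued function, the inequality fails: for two-bit $\f{and}$ one has $\norm{\fc{f}}_\infty = 1/4$, yet $-\tfrac12 - \tfrac12\chi_{\{1\}} - \tfrac12\chi_{\{2\}}$ sign-represents it with three monomials, so $\signmon(\f{and}) \le 3 < 4$. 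What your proof (and Bruck's) establishes is $\signmon(f) \ge 1/\norm{\fc{g}}_\infty$ for the $\pm 1$-valued $g=(-1)^f$; since $|\fc{g}(S)| = 2|\fc{f}(S)|$ for $S\neq\emptyset$, the two versions differ by at most a factor of $2$, which is immaterial for the paper's applications (in particular Corollary~\ref{cor:infinity-norm}).
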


\subsection*{Two-party functions}

A capital function name will refer to a function with two inputs, e.g., $F : \calX \times \calY \to \{0,1\}$ where $\calX$ and $\calY$ are some finite sets. We'll abuse notation and also use $F$ to denote the $|\calX|$ by $|\calY|$ matrix corresponding to $F$, i.e., the $(x,y)$'th entry of the matrix contains the value $F(x,y)$. It will always be clear from the context whether $F$ refers to a function or a matrix. 

Given $f: \{0,1,\ldots,n\} \to \{0,1\}$, we'll define $f_i : \{0,1,\ldots,i\} \to \{0,1\}$ by $f_i(j) = f(j)$. We denote by $F^{\wedge}_{n,f}: \{0,1\}^n \times \{0,1\}^n \to \{0,1\}$ the communication function such that $F^{\wedge}_{n,f}(x,y) = f(|x\wedge y|)$. We use the notation $F^{\wedge}_{n,k,f_k}$ when the inputs $x$ and $y$ are promised to satisfy $|x| = |y| = k$. Similarly, we define $F^{\xor}_{n,f}$ and $F^{\xor}_{n,k,f_{2k}}$, for $k \leq n/2$.

In an important paper, Razborov \cite{Raz03} gave close to tight lower bounds on the randomized communication complexity of $F^{\wedge}_{n,f}$ where $f$ is a symmetric function. His main result can be stated as a lower bound on the approximate trace norm of a certain submatrix of $F^{\wedge}_{n,f}$:
\begin{theorem}[\cite{Raz03}] \label{thm:raz03}
For $k \leq n/4$, let $f: \{0,1,\ldots, k\} \to \{0,1\}$. If for some $\ell \leq k/4$ we have $f(\ell - 1) \neq f(\ell)$, then 
\[
    \norm{F^{\wedge}_{n,k,f}}_{\tr,1/4} \geq {n \choose k} e^{\Omega(\sqrt{k\ell})}.
\]
\end{theorem}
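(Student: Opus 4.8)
The plan is to use the linear-programming dual of the $\eps$-approximate trace norm. For matrices write $\langle A,B\rangle\defeq\sum_{x,y}A(x,y)B(x,y)$ and $\norm{B}_{\ell_1}\defeq\sum_{x,y}|B(x,y)|$; since the trace norm is dual to the spectral norm $\norm{\cdot}$, a minimax argument gives $\norm{M}_{\tr,\eps}=\max\{\langle M,\Psi\rangle-\eps\norm{\Psi}_{\ell_1}:\norm{\Psi}\le1\}$. So it suffices to exhibit a single witness $\Psi$ with $\norm{\Psi}\le1$ and $\langle M,\Psi\rangle-\frac14\norm{\Psi}_{\ell_1}\ge\binom nk e^{\Omega(\sqrt{k\ell})}$, where $M=F^{\wedge}_{n,k,f}$ has rows and columns indexed by weight-$k$ strings. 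Because $M$ is invariant under the coordinate-permutation action of $S_n$, I would look for $\Psi$ inside the Bose--Mesner algebra of the Johnson scheme $J(n,k)$, that is $\Psi=\sum_{i=0}^{k}c_iJ_i$ with $J_i(x,y)=1$ iff $|x\wedge y|=i$. For such $\Psi$ one has the closed forms $\langle M,\Psi\rangle=\sum_i f(i)c_iN_i$ and $\norm{\Psi}_{\ell_1}=\sum_i|c_i|N_i$, where $N_i=\binom nk\binom ki\binom{n-k}{k-i}=\binom nk v_i$ with $v_i$ the valency of $J_i$; and since the $J_i$ are simultaneously diagonalizable, with the eigenvalue of $J_i$ on the $j$-th eigenspace (of dimension $m_j=\binom nj-\binom n{j-1}$) equal to $\frac{v_i}{m_j}R_j(i)$, where $R_j$ is the degree-$j$ dual Hahn (Eberlein) polynomial, one gets $\norm{\Psi}=\max_{0\le j\le k}\frac1{m_j}\bigl|\sum_i c_iv_iR_j(i)\bigr|$.

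The coefficients $c_i$ should come from a dual polynomial for the approximate degree of $f$. Since $f(\ell-1)\ne f(\ell)$ with $\ell\le k/4$, this jump sits at distance $\approx\ell$ from the nearer endpoint of $\{0,1,\ldots,k\}$, so Paturi's theorem \cite{Pat92} gives that $f$, viewed as a symmetric Boolean function on $\{0,1\}^k$, has $\frac13$-approximate degree at least $d=\Omega(\sqrt{k\ell})$. LP duality for the approximate-degree program then provides a symmetric $\psi:\{0,1\}^k\to\R$, $\psi(z)=\widetilde\psi(|z|)$, with $\sum_z|\psi(z)|=1$, $\sum_z\psi(z)f(|z|)>\frac13$, and $\psi$ orthogonal to every polynomial of degree $<d$. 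Symmetrizing the orthogonality relation, in the intersection-size variable this reads: setting $b_i\defeq\binom ki\widetilde\psi(i)$, one has $\sum_i|b_i|=1$, $\sum_i b_if(i)>\frac13$, and $\sum_i b_i\,r(i)=0$ for every univariate polynomial $r$ of degree $<d$. I would then take $c_i\defeq\kappa\,b_i/v_i$ for a scalar $\kappa>0$ to be chosen, so that (using $N_i=\binom nk v_i$) $\langle M,\Psi\rangle=\kappa\binom nk\sum_i f(i)b_i>\frac\kappa3\binom nk$, $\norm{\Psi}_{\ell_1}=\kappa\binom nk$, and $\norm{\Psi}=\kappa\max_j\frac1{m_j}\bigl|\sum_i b_iR_j(i)\bigr|$.

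The crux is the spectral estimate: I need $\max_j\frac1{m_j}\bigl|\sum_i b_iR_j(i)\bigr|\le e^{-\Omega(\sqrt{k\ell})}$, which then permits taking $\kappa=e^{\Omega(\sqrt{k\ell})}$ while keeping $\norm{\Psi}\le1$. For $j<d$ the sum vanishes identically: $R_j$ is a polynomial of degree $j<d$ in $i$ and $(b_i)$ is orthogonal to all such polynomials --- this is exactly what the orthogonality of the dual polynomial is for. For $j\ge d$ what is required is a genuine magnitude bound, roughly $|R_j(i)|\le m_j\,e^{-\Omega(\sqrt{k\ell})}$ uniformly over the (small, $i=O(\sqrt{k\ell})$) range of intersection sizes carrying the mass of $\widetilde\psi$. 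Establishing this bound on the dual Hahn / Eberlein polynomials is the main obstacle and is the technical heart of Razborov's proof; it is also where the hypotheses $k\le n/4$ and $\ell\le k/4$ are genuinely used, to keep the binomial ratios and the relevant parameters in the regime where a saddle-point analysis of these orthogonal polynomials goes through. (An alternative that replaces the orthogonal-polynomial asymptotics by a different but still substantial spectral estimate is Sherstov's pattern-matrix method applied to the composition $f(|x\wedge y|)$ with the two-bit AND gadget on a Hamming slice, where the spectral norm is controlled instead by the gadget's bias raised to the power $d$.)

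Finally, with the spectral bound in hand, substituting into the dual formula with $\eps=\frac14<\frac13$ makes the correlation term beat the $\frac14\norm{\Psi}_{\ell_1}$ term by the constant margin $\frac13-\frac14$, so $\norm{M}_{\tr,1/4}\ge\frac1{12}\kappa\binom nk=\binom nk e^{\Omega(\sqrt{k\ell})}$, as desired. Any fixed approximation parameter strictly below $\frac13$ would work in place of $\frac14$; its only role is to leave a positive constant gap against the $\frac13$ correlation achieved by the dual polynomial.
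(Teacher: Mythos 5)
This statement is quoted in the paper as a black box from Razborov's work \cite{Raz03}; the paper supplies no proof of its own, so there is nothing in the text to compare your argument against line by line. Judged on its own terms, your outline does faithfully reproduce the architecture of Razborov's original argument: pass to the dual of the approximate trace norm, search for a witness inside the Bose--Mesner algebra of the Johnson scheme $J(n,k)$, build the coefficients from the (symmetrized) dual polynomial certifying Paturi's $\Omega(\sqrt{k\ell})$ lower bound on approximate degree, observe that orthogonality to low-degree polynomials annihilates the first $d$ eigenspaces, and then finish by bounding the remaining eigenvalues. The bookkeeping you do with $N_i$, $v_i$, $m_j$, the relation $p_i(j)=\tfrac{v_i}{m_j}R_j(i)$, and the final $\tfrac13-\tfrac14$ margin is all correct.

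However, as a proof it has a genuine gap, and you name it yourself: the spectral estimate $\max_{j\ge d}\frac{1}{m_j}\bigl|\sum_i b_i R_j(i)\bigr|\le e^{-\Omega(\sqrt{k\ell})}$ is asserted, not established. This is not a routine step that can be waved through. You would need either (i) a uniform pointwise bound $|R_j(i)|\le m_j\,e^{-\Omega(\sqrt{k\ell})}$ over \emph{all} $i\in\{0,\dots,k\}$, since LP duality alone does not tell you that the dual witness $(b_i)$ is concentrated on small $i$; or (ii) an explicit dual witness with controlled support together with sharp asymptotics for the Eberlein/dual Hahn polynomials in the relevant parameter regime --- and it is precisely in carrying out (ii), with a carefully engineered perturbation of the witness, that Razborov spends most of his paper and where the hypotheses $k\le n/4$ and $\ell\le k/4$ actually enter. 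Your parenthetical pointer to Sherstov's pattern-matrix alternative is accurate as an alternative route, but it likewise rests on a nontrivial spectral-norm computation that you have not reproduced. So the proposal is best read as a correct roadmap to Razborov's proof rather than a proof: the skeleton and the reductions are right, but the load-bearing orthogonal-polynomial estimate is missing.
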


We'll also need a result from Sherstov \cite{She12} that gives essentially tight lower bounds on the sign-rank of all \emph{symmetric-and} functions (see Section~\ref{sec:communication} for this result's relation to communication complexity). 

\begin{theorem}[\cite{She12}] \label{thm:she12}
Let $f : \{0,1,\ldots, n\} \to \{0,1\}$. Then
\[
    \signrank(F^{\wedge}_{n,f}) \geq 2^{\Omega(\lambda(f) / \log^5 n)}.
\]
\end{theorem}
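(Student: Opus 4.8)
The plan is to certify a large sign-rank through the dual/geometric route built on Forster's theorem. Recall that for a sign matrix $M$, $\signrank(M)$ is large whenever one can exhibit a probability distribution $\mu$ over its entries witnessing that $M$ is ``spectrally far'' from every low-rank sign matrix; concretely, by Forster's theorem and its weighted refinements it suffices that the entrywise-reweighted matrix $\sqrt{\mu}\circ M$ has small spectral norm while $\mu$ is not too concentrated. The extreme case fixes the intuition: if $f$ is the parity of the Hamming weight, then $F^{\wedge}_{n,f}(x,y) = \bigoplus_i (x_i \wedge y_i)$ is the inner-product matrix, whose spectral norm is only $2^{n/2}$ against dimension $2^n$, so Forster directly gives sign-rank $2^{\Omega(n)}$ — and here $\lambda(f) = n$. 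For a general $f$ the matrix $F^{\wedge}_{n,f}$ has no such small spectral norm, so the first task is to extract the hidden hard structure created by the $\lambda(f)$ sign changes and reduce, via the dual framework, to a purely analytic question about $f$ viewed as a univariate predicate.

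The second step is a combinatorial reduction. Using the reordering equivalence ``$M = M'$'' together with restriction to the promise versions $F^{\wedge}_{n,k,f_k}$ (inputs of prescribed Hamming weight), I would pick $k$ so that $|x \wedge y|$ ranges over a window of $\{0,\ldots,n\}$ in which $f$ still alternates $\Omega(\lambda(f))$ times, and argue that the corresponding submatrix of $F^{\wedge}_{n,f}$ is, for the purpose of sign-rank, at least as hard as a robust version of a predicate whose threshold degree is governed by the number of alternations of this sub-predicate of $f$, i.e.\ $\Omega(\lambda(f))$ up to the weight-restriction losses. This is the standard move that turns ``$f$ has many sign changes'' into ``the two-party matrix hides a high-threshold-degree predicate in a smooth way,'' and it should cost at most $\mathrm{polylog}(n)$ factors.

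The heart of the argument, and the step I expect to be the main obstacle, is the construction of a \emph{smooth} dual polynomial. A symmetric predicate with $\lambda$ sign changes has threshold degree $\Omega(\lambda)$, witnessed by a univariate polynomial $\psi$ of pure high degree that correlates with $f$; the difficulty — and the source of the logarithmic losses — is that Forster's machinery additionally requires $\psi$ to be smooth, bounded away from $0$ in absolute value (up to a constant) at every point, and one must buy this smoothness without destroying either the degree or the correlation. I would use a smoothing argument: starting from the naive dual witness read off from the sign-change pattern, convolve it against a suitable low-degree smoothing kernel and correct the perturbation, carefully tracking how the degree degrades (from $\lambda$ down to $\Omega(\lambda/\log^{O(1)} n)$), how the correlation with $f$ survives, and how the pointwise lower bound on $|\psi|$ is obtained. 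Feeding the resulting smooth degree-$d$ dual witness into the pattern-matrix method for sign-rank (Forster applied with the reweighting supplied by the smoothness) yields $\signrank(F^{\wedge}_{n,f}) \ge 2^{\Omega(d)}$, and bookkeeping all the polylog losses gives the claimed $2^{\Omega(\lambda(f)/\log^5 n)}$.

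Finally I would dispatch the degenerate regimes directly — $\lambda(f) = O(\log^5 n)$, where the bound is vacuous, and very small $n$ — and verify that the Hamming-weight restriction genuinely preserves $\Omega(\lambda(f))$ alternations (not merely $\Omega(\lambda(f_k))$ for some sub-predicate), so that the exponent in the final bound is honestly $\Omega(\lambda(f)/\log^5 n)$.
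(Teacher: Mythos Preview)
This theorem is not proved in the paper at all: it is stated in the Preliminaries with the citation \cite{She12} and then used as a black box inside the proof of Theorem~\ref{thm:main2}. So there is no ``paper's own proof'' to compare your proposal against. Your sketch is a reasonable high-level outline of Sherstov's original argument (Forster's generalized bound plus a smooth orthogonalizing dual for the symmetric predicate, with the $\mathrm{polylog}$ losses coming from the smoothing), but for the purposes of this paper you are free to simply invoke the result.
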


Our main interest in 2-party functions is due to the tight links between the Fourier analytic properties of a Boolean function $f:\{0,1\}^n \to \{0,1\}$ and the matrix analytic properties of $F = F^\xor_{n,f}$.

\begin{proposition}[Folklore] \label{prop:ftoF}
Let $f: \{0,1\}^n \to \{0,1\}$ be any function and let $F = F^\xor_{n,f}$. Then
\begin{enumerate}[(a)]
    \item $\mon(f) = \rank(F)$,
    \item $\mon_\eps(f) \geq \rank_\eps(F)$
    \item $\signmon(f) \geq \signrank(F)$,
    \item $2^n \norm{\fc{f}}_\infty = \norm{F}$,
    \item $2^n \norm{\fc{f}}_1 = \norm{F}_{\tr}$,
    \item $2^n \norm{\fc{f}}_{1,\eps} = \norm{F}_{\tr, \eps}$.
\end{enumerate}
\end{proposition}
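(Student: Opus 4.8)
The plan is to exploit that the matrix $F = F^\xor_{n,f}$ is diagonalized by the Fourier characters. Using $\chi_S(x\xor y) = \chi_S(x)\chi_S(y)$ together with the Fourier expansion $f = \sum_S \fc{f}(S)\chi_S$,
\[
    F(x,y) = f(x\xor y) = \sum_{S\subseteq[n]} \fc{f}(S)\,\chi_S(x)\,\chi_S(y),
\]
so, writing $v_S$ for the vector with entries $\chi_S(x)$ (indexed by $x\in\{0,1\}^n$), we get $F = \sum_S \fc{f}(S)\, v_S v_S^\top$. The $v_S$ are pairwise orthogonal with $\norm{v_S}_2^2 = 2^n$, so after normalization this is a spectral decomposition of the real symmetric matrix $F$: its eigenvalues are exactly $\{2^n\fc{f}(S) : S\subseteq[n]\}$ and hence its singular values are $\{2^n|\fc{f}(S)|\}$. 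Parts (a), (d), (e) follow at once, since the number of nonzero singular values is $\mon(f)$, the largest is $2^n\norm{\fc{f}}_\infty$, and their sum is $2^n\norm{\fc{f}}_1$. I would record that this argument never uses that $f$ is $\{0,1\}$-valued, so the identities $\mon(g) = \rank(G^\xor_{n,g})$ and $2^n\norm{\fc{g}}_1 = \norm{G^\xor_{n,g}}_{\tr}$ hold for \emph{any} real-valued $g$ on $\{0,1\}^n$; I use this below.

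For (b), (c), and the ``$\le$'' half of (f), I go from a polynomial to a matrix: given a real $\phi$ on $\{0,1\}^n$ that $\eps$-approximates (resp.\ sign-represents) $f$, set $\Phi(x,y) \defeq \phi(x\xor y)$. By the computation above $\rank(\Phi) = \mon(\phi)$ and $\norm{\Phi}_{\tr} = 2^n\norm{\fc{\phi}}_1$, while $\Phi(x,y) - F(x,y) = \phi(x\xor y) - f(x\xor y)$, so $\Phi$ $\eps$-approximates (resp.\ sign-represents) $F$ entrywise. Choosing $\phi$ optimal gives $\rank_\eps(F)\le\mon_\eps(f)$, $\signrank(F)\le\signmon(f)$, and $\norm{F}_{\tr,\eps}\le 2^n\norm{\fc{f}}_{1,\eps}$.

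The remaining inequality, $2^n\norm{\fc{f}}_{1,\eps}\le\norm{F}_{\tr,\eps}$, is the one place a genuine idea is needed, and is the main (though mild) obstacle; I would handle it by symmetrization. Note that $F$ is invariant under the action of $\{0,1\}^n$ given by $(x,y)\mapsto(x\xor z, y\xor z)$, since $f((x\xor z)\xor(y\xor z)) = f(x\xor y)$, and each such move simply reorders the rows and columns of $F$. Hence if $M'$ $\eps$-approximates $F$ entrywise then so does each of its shifts, and therefore so does the average $\bar M'(x,y)\defeq 2^{-n}\sum_{z\in\{0,1\}^n} M'(x\xor z, y\xor z)$. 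Substituting $u = x\xor z$ shows $\bar M'(x,y)$ depends only on $x\xor y$: with $g(w)\defeq 2^{-n}\sum_u M'(u, u\xor w)$ we have $\bar M' = G^\xor_{n,g}$, and $g$ $\eps$-approximates $f$. Since the trace norm is invariant under reordering rows and columns, the triangle inequality gives $\norm{\bar M'}_{\tr}\le 2^{-n}\sum_z\norm{M'}_{\tr} = \norm{M'}_{\tr}$. Applying the real-valued version of (e) to $g$, $2^n\norm{\fc{f}}_{1,\eps}\le 2^n\norm{\fc{g}}_1 = \norm{\bar M'}_{\tr}\le\norm{M'}_{\tr}$; taking $M'$ optimal finishes (f). Note that this averaging does not control rank, which is why only the stated inequalities are claimed in (b) and (c).
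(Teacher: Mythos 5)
Your proof is correct, and the paper itself gives none: Proposition~\ref{prop:ftoF} is stated as ``Folklore'' with no argument supplied, so there is no in-paper proof to compare against. What you wrote is the standard argument, and you got all the subtle points right: the observation that the eigendecomposition works for arbitrary real-valued $g$ (which you then reuse on $\phi$ and on the symmetrized $g$), the one-directional nature of (b) and (c), and — the only place where something must actually be proved — the symmetrization step for the ``$\geq$'' half of (f), where you average $M'$ over the diagonal action $(x,y)\mapsto(x\xor z,y\xor z)$, check that the average still $\eps$-approximates $F$ and has the xor structure, and use convexity (triangle inequality) plus permutation-invariance of the trace norm. Your closing remark about why the averaging trick cannot be used to upgrade (b) and (c) to equalities (rank and sign-rank are not convex, so averaging does not help) is exactly the right thing to say and explains why the paper only claims inequalities there. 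The one thing you might add for completeness is a one-line justification that $g$ $\eps$-approximates $f$: $g(w)-f(w)=2^{-n}\sum_u\bigl(M'(u,u\xor w)-F(u,u\xor w)\bigr)$, a convex combination of terms each bounded by $\eps$ in absolute value — but this is implicit in what you wrote.
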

%\begin{proof}
%%There is a way to diagonalize the matrix $F$ so that the diagonal entries are the Fourier coefficients of $f$ multiplied by $2^n$. This shows the first four parts. For the last part, by a symmetrization argument, we can assume that the matrix $G$ which is point-wise $\eps$-close to $F$ with the minimum trace norm is such that $G = g \circ \xor$.
%
%XXX
%\end{proof}

\section{Main Results}

\begin{theorem} \label{thm:main1}
Let $f:\{0,1\}^n \to \{0,1\}$ be a symmetric function. Then,
\[
    \Omega(r(f)) \leq \log \mon_{1/4}(f) \leq O\left(r(f) \log\left(\frac{n}{r(f)}\right)\right).
\]
\end{theorem}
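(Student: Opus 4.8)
The plan is to prove the two bounds separately, leveraging the dictionary between Fourier properties of $f$ and matrix properties of $F = F^\xor_{n,f}$ provided by Proposition~\ref{prop:ftoF}, together with Theorem~\ref{thm:afh12} and Theorem~\ref{thm:bs92} for the upper bound and Theorem~\ref{thm:raz03} for the lower bound.

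\textbf{Upper bound.} Here I would simply combine the Bruck--Smolensky bound (Theorem~\ref{thm:bs92}) with the $L_1$ characterization of \cite{AFH12} (Theorem~\ref{thm:afh12}). Indeed, $\mon_{1/4}(f) \le 64 n \cdot \norm{\fc{f}}_1^2$, so $\log \mon_{1/4}(f) \le 2 \log \norm{\fc{f}}_1 + O(\log n)$. When $r(f) > 1$, Theorem~\ref{thm:afh12} gives $\log \norm{\fc{f}}_1 = O\!\left(r(f)\log(n/r(f))\right)$, and one checks that the additive $O(\log n)$ term is absorbed into this quantity (it is $O(r(f)\log(n/r(f)))$ since $r(f) \ge 1$ and $r(f) \le \lceil n/2\rceil$). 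The small cases $r(f) \in \{0,1\}$ can be handled directly, since then $f$ is close to a function of very low monomial complexity.

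\textbf{Lower bound.} This is where the real work is. I want to show $\log \mon_{1/4}(f) = \Omega(r(f))$. By Proposition~\ref{prop:ftoF}(b), $\mon_{1/4}(f) \ge \rank_{1/4}(F^\xor_{n,f})$, and by Proposition~\ref{prop:tracetorank} it suffices to lower bound the approximate trace norm of a suitable $k\times k$ submatrix $M$ of $F^\xor_{n,f}$: if $\norm{M}_{\tr,1/4} \ge \binom{n}{k} e^{\Omega(\sqrt{k\ell})}$ with $k \le n/4$, then $\log \rank_{1/4}(M) \ge \Omega(\sqrt{k\ell}) - O(\log\binom{n}{k}) - O(\log k)$, and I need to choose the parameters so this is $\Omega(r(f))$. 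The natural source of such a submatrix is Razborov's bound (Theorem~\ref{thm:raz03}), which is stated for \emph{and}-functions $F^\wedge$; so the key reduction is: restricting the inputs of $F^\xor_{n,f}$ to $x$ in the all-zeros-padded coordinates and $y$ ranging appropriately embeds a copy of some $F^\wedge_{m,k,g}$ as a submatrix, where $g$ is a restriction/shift of $f$. The definition of $r(f) = \max\{r_0(f), r_1(f)\}$ is precisely engineered so that either near Hamming weight $0$ or near Hamming weight $n$ there is a "violation of the period-2 pattern" at distance roughly $r(f)$ from the relevant endpoint; that violation gives an $\ell$ with $f(\ell-1)\neq f(\ell)$ (after the appropriate shift/complementation) inside the promised range, with $\ell = \Theta(r(f))$.

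\textbf{Main obstacle.} The delicate point is the parameter balancing in the reduction to Theorem~\ref{thm:raz03}: that theorem requires $\ell \le k/4$ and $k \le n/4$, and it only yields $e^{\Omega(\sqrt{k\ell})}$, whereas I want $e^{\Omega(r(f))}$. To get $\sqrt{k\ell} = \Omega(r(f))$ with $\ell = \Theta(r(f))$ I would need $k = \Omega(r(f))$ as well, i.e. $k\ell = \Theta(r(f)^2)$; and I must simultaneously ensure the $\binom{n}{k}$ overhead in Proposition~\ref{prop:tracetorank} (which contributes $O(k\log(n/k))$ to the exponent after taking logs) does not swamp the $\Omega(\sqrt{k\ell}) = \Omega(r(f))$ gain — this forces $k$ to be taken as small as $\Theta(r(f))$, living right at the boundary of the constraint $\ell \le k/4$. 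Handling the two cases $r(f) = r_0(f)$ (violation near weight $0$, where one complements and/or shifts $f$ suitably) and $r(f) = r_1(f)$ (violation near weight $n$, handled symmetrically by replacing $f(x)$ with $f(n-x)$ or by flipping one player's input), and checking that the restriction genuinely yields a sub-function of the form in Theorem~\ref{thm:raz03} rather than something off by one in the index, is the fiddly but essential core of the argument.
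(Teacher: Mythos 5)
Your upper bound matches the paper exactly (Bruck--Smolensky plus Theorem~\ref{thm:afh12}), and the high-level plan for the lower bound — pass to $F^\xor_{n,f}$, take a weight-restricted submatrix, convert it to an and-function, invoke Razborov, and use Proposition~\ref{prop:tracetorank} — is also the paper's route. However, your ``main obstacle'' paragraph contains a genuine error that, if taken at face value, would make the whole approach unworkable. You claim that Proposition~\ref{prop:tracetorank} costs you $O(\log\binom{n}{k})$, so that
\[
\log \rank_{1/4}(M) \ge \Omega(\sqrt{k\ell}) - O\bigl(\log\tbinom{n}{k}\bigr) - O(\log k),
\]
and that this tension forces $k = \Theta(r(f))$. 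In fact there is no such cost. The submatrix $M = F^\wedge_{n-t,k,f'_k}$ in question has dimension $\binom{n-t}{k} \times \binom{n-t}{k}$, so the ``$k$'' in Proposition~\ref{prop:tracetorank} is $\binom{n-t}{k}$, and Razborov's trace-norm bound $\norm{M}_{\tr,1/4} \ge \binom{n-t}{k} e^{\Omega(\sqrt{k\ell})}$ has the \emph{same} $\binom{n-t}{k}$ in front. They cancel:
\[
\rank_{1/4}(M) \ \ge\ \left(\frac{\binom{n-t}{k}\,e^{\Omega(\sqrt{k\ell})}}{\binom{n-t}{k}\cdot(1+1/4)}\right)^{\!2} \ =\ \Omega\bigl(e^{\Omega(\sqrt{k\ell})}\bigr),
\]
so $\log\rank_{1/4}(M) \ge \Omega(\sqrt{k\ell})$ with no subtraction at all. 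This matters because your subtracted version is not just suboptimal but catastrophic: since $\ell \le k/4$ we always have $\sqrt{k\ell} \le k < k\log(n/k)$, so $\Omega(\sqrt{k\ell}) - O(k\log(n/k))$ is negative for essentially all parameter choices and the argument would never close. The constraint you identify (``$k$ must be $\Theta(r(f))$'') is therefore not what drives the parameter choices.

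What actually drives the parameter choices in the paper is the pair of hard constraints $k \le (n-t)/4$ and $\ell \le k/4$ from Theorem~\ref{thm:raz03}, together with the need to keep the shift $t$ of the right parity so that $\ell = k + (t-s-1)/2$ is an integer. The paper splits into two regimes depending on whether $s \le 3(n-1)/8$ or not: in the first regime $t \in \{0,1\}$ and $k = \lfloor 2s/3\rfloor$ suffice; in the second $t \approx n/4$ is used so that $k$ can still be kept below $(n-t)/4$, while $\ell$ remains $\Omega(s)$. In both cases one gets $\sqrt{k\ell} = \Omega(s) = \Omega(r(f))$ directly. So your proposal has the right skeleton, but you should delete the imagined $\binom{n}{k}$ overhead, recognize that the binomial coefficients cancel, and replace the ``$k$ must be tiny'' intuition with the actual case analysis driven by the $k \le (n-t)/4$, $\ell \le k/4$, and parity constraints.
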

\begin{proof}\ \\
\noindent
\emph{Lower bound:}

We first note that we may assume that $r(f) = r_0(f)$. In fact, if $r(f) = r_1(f)$, then we can consider the function $f'$ defined as $f'(i) = f(n-i)$. We note that $\mon_{\eps}(f) = \mon_{\eps}(f')$. To see this, given a function $g$ approximating $f$ with $\| f - g \|_{\infty} \leq \eps$, the function $g'$ defined by $g'(x_1, \dots, x_n) = g(1-x_1, \dots, 1-x_n)$ satisfies $\| f' - g' \|_{\infty} \leq \eps$ and $|\fc{g'}(S)| = 2^{-n} |\sum_{x} g(x) \chi_{S}(1-x_1, \dots, 1-x_n)| = |\fc{g}(S)|$. This shows that $\mon_{1/4}(f) = \mon_{1/4}(f') \geq r_0(f')$. But we have $f(n-r_1(f)-1) \neq f(n-r_1(f)+1)$, i.e., $f'(r_1(f)-1) \neq f'(r_1(f)+1)$ (except if $r_1(f) = 0$, but this case is simple). This implies that $r_0(f') \geq r_1(f)$. 

For the remainder of the proof, we assume there is an $s \in \{1, \dots, \ceil{n/2}\}$ such that $f(s-1) \neq f(s+1)$ and $s \geq r(f)$.

%First, we'll assume that $r(f) = r_0(f)$, $r_0(f)$ is odd, and $r_0(f) \leq 3n/8$. At the end of the proof, we'll give an argument for when this is not true.
%The lower bound is proved 
%\begin{lemma}
%There is a constant $c > 0$ such that for any odd integer $s$ with $1 \leq s \leq 3n/8$ such that $f(s-1) \neq f(s+1)$, we have $\mon_{1/4}(f) \geq c s$.
%\end{lemma}
%\begin{proof}
In light of Proposition~\ref{prop:ftoF}, part (b), our goal will be to show a lower bound on $\rank_\eps(F^\xor_{n,f})$. For any $k,t$ such that $2k+t \leq n$, we define the submatrix $F^\xor_{n-t,k,f_{2k}}$ of $F^\xor_{n,f}$ of size $\binom{n-t}{k} \times \binom{n-t}{k}$ by $F^\xor_{n-t,k,f_{2k}}(x,y) = f(|x \xor y| + t)$ for all $x,y \in \{0,1\}^{n-t}$. Note that this is for example the submatrix obtained by considering all the bitstrings $x',y' \in \{0,1\}^n$ for which the first $t$ bits are set to one and among the remaining $n-t$ bits, exactly $k$ are set to $1$.
% it suffices to show a lower bound on the approximate-rank of $F^\xor_{n,k,f_{2k}}$, for a convenient choice of $k$.

Observe that $|x \xor y| = |x| + |y| - 2|x \wedge y|$. In particular, when $|x| = |y| = k$, $|x \xor y| = 2k - 2|x \wedge y|$. This means that 
\[
F^\xor_{n-t,k,f_{2k}} = F^{\wedge}_{n-t,k,f'_{k}},
\]
where 
\[
f'_{k}(i) = f_{2k}(2k - 2i + t) \text{ for } i \in \{0,1,\ldots , k\}.
\]
Thus, we'll show a lower bound on the approximate-rank of $F^{\wedge}_{n-t,k,f'_{k}}$. To do this, first we'll use Proposition~\ref{prop:tracetorank}, and show a lower bound on the approximate-trace norm. To show a lower bound on the approximate-trace norm, we'll use Theorem~\ref{thm:raz03} and the fact that
\[
f_{2k}(s - 1) \neq f_{2k}(s+1) \implies f'_k\left(k + \frac{t-(s-1)}{2}\right) \neq f'_k\left(k + \frac{t-(s+1)}{2}\right).
\]
In other words, our choice for $\ell$ in Theorem~\ref{thm:raz03} will be $k + \frac{t-s-1}{2}$. Let's now specify $k$ and $t$. Note that we should make sure that $t - s - 1$ is even. We distinguish two cases depending on whether $s \leq 3(n-1)/8$ or not.

If $s \leq 3(n-1)/8$, then we simply set $t = 0$ if $s$ is odd and $t = 1$ if $s$ is even. Then we let $k = \lfloor \frac{2s}{3} \rfloor$. Since $s \leq 3(n-1)/8$, it is easy to check that $k \leq (n-t)/4$ and $\ell \leq k/4$ as required by Theorem~\ref{thm:raz03}. So we have
\[
\norm{F^\xor_{n-t,k,f_{2k}}}_{\tr, 1/4} = \norm{F^\wedge_{n-t,k,f'_k}}_{\tr, 1/4} \geq {n-t \choose k} e^{\Omega(\sqrt{k\ell})},
\]
which, by Proposition~\ref{prop:tracetorank} and our choices for $k$ and $\ell$, implies
\[
\rank_{1/4}(F^\xor_{n-t,k,f_{2k}}) \geq e^{\Omega(\sqrt{k \ell})} = e^{\Omega(s)}.
\]

In the case $s > 3(n-1)/8$, we set $t = \floor{n/4}$ or $t = \floor{n/4} - 1$ depending on the parity of $s$, and $k = \floor{2(s-1-n/4)/3}$. We then have $k \leq n/6$ using the fact that $s \leq n/2+1$. As $t \leq n/4$, this implies that $k \leq \frac{n-t}{4}$. On the other hand, we have $k = \Omega(n)$. Now recall that $\ell = k + \frac{t - s - 1}{2}$. But $\frac{s+1-t}{2} \geq 3k/4$ which implies that $\ell \leq k/4$. In addition, as $s > 3(n-1)/8$, we also have $\ell = \Omega(n)$. As a result, we can apply Theorem~\ref{thm:raz03} and obtain
\[
\rank_{1/4}(F^\xor_{n-t,k,f_{2k}}) \geq e^{\Omega(\sqrt{k \ell})} = e^{\Omega(s)}.
\]
Using Proposition~\ref{prop:ftoF} part (b), we obtain the desired result.

\noindent
{\em Upper bound:} 

Using Theorem~\ref{thm:bs92}, we have $\mon_{1/4}(f) \leq 64n \| \fc{f} \|_1^2$. Taking the logarithm and using Theorem~\ref{thm:afh12} we get the desired result.
\end{proof}

\begin{theorem} \label{thm:main2}
Let $f:\{0,1\}^n \to \{0,1\}$ be a symmetric function. Then,
\[
    \Omega(\rho(f)/\log^5 n) \leq \log \mon_\pm(f) \leq O(1+\rho(f)\log n).
\]
\end{theorem}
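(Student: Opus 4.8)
The plan is to prove the two bounds separately: the upper bound by an explicit low-complexity sign-representation that exploits the near-$2$-periodicity of $f$, and the lower bound by the slicing reduction used for Theorem~\ref{thm:main1} combined with Sherstov's sign-rank bound (Theorem~\ref{thm:she12}).

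\emph{Upper bound.} Here $r(f)$ is the wrong parameter (it can be far larger than $\rho(f)$); the point is that $f$ is almost $2$-periodic. Let $E = \{i \text{ even} : f(i) \neq f(i+2)\}$ and $O = \{i \text{ odd} : f(i) \neq f(i+2)\}$, so $|E| + |O| = \rho(f)$. For suitable signs $c_{\mathrm e}, c_{\mathrm o} \in \{\pm 1\}$, the univariate polynomials $P_{\mathrm e}(t) = c_{\mathrm e}\prod_{i \in E}(t - i - 1)$ and $P_{\mathrm o}(t) = c_{\mathrm o}\prod_{i \in O}(t - i - 1)$ have degrees $|E|, |O| \leq \rho(f)$ and satisfy $P_{\mathrm e}(j) > 0 \iff f(j) = 1$ for every even integer $j$ (its roots are odd integers, so it never vanishes on evens, and it flips sign exactly across each distance-$2$ jump of $f$ at an even index), and likewise $P_{\mathrm o}$ on odd integers. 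Using the parity selector $\frac{1 \pm \chi_{[n]}(x)}{2}$, the function
\[
    \psi(x) = \frac{1 + \chi_{[n]}(x)}{2}\, P_{\mathrm e}(|x|) + \frac{1 - \chi_{[n]}(x)}{2}\, P_{\mathrm o}(|x|)
\]
reduces to $P_{\mathrm e}(|x|)$ on even weights and to $P_{\mathrm o}(|x|)$ on odd weights, hence sign-represents $f$. Since $P_{\mathrm e}(|x|)$ and $P_{\mathrm o}(|x|)$ are multilinear polynomials of degree $\leq \rho(f)$, each has at most $\sum_{j \leq \rho(f)} \binom{n}{j}$ monomials; writing $\psi = \frac{1}{2}(P_{\mathrm e} + P_{\mathrm o})(|x|) + \chi_{[n]}(x)\cdot\frac{1}{2}(P_{\mathrm e} - P_{\mathrm o})(|x|)$ and noting that multiplying by $\chi_{[n]}$ only relabels $\chi_S \mapsto \chi_{[n]\setminus S}$, we get $\mon(\psi) \leq 4\sum_{j \leq \rho(f)}\binom{n}{j} \leq n^{O(\rho(f))}$, so $\log\signmon(f) \leq O(1 + \rho(f)\log n)$.

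\emph{Lower bound.} By Proposition~\ref{prop:ftoF}(c), $\signmon(f) \geq \signrank(F^\xor_{n,f})$, and since sign-rank cannot increase under passing to a submatrix, it suffices to lower bound the sign-rank of one of the slices $F^\xor_{n-t,k,f_{2k}}$ from the proof of Theorem~\ref{thm:main1}. Take $t \in \{0,1\}$ and $k = \lfloor (n-t)/2 \rfloor$; on this slice $|x \xor y| + t = 2k + t - 2|x \wedge y|$, so it equals the symmetric-and matrix whose $(x,y)$ entry is $g(|x \wedge y|)$ for $g(j) = f(2k + t - 2j)$. The alternation number $\lambda(g)$ counts precisely the distance-$2$ jumps of $f$ at indices of parity $t$, and since $k \approx n/2$ the range of $2k + t - 2j$ sweeps over all of them, so $\lambda(g) = |E|$ when $t = 0$ and $\lambda(g) = |O|$ when $t = 1$. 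Choosing $t$ to pick the larger one gives $\lambda(g) \geq \rho(f)/2$, and applying Sherstov's bound (Theorem~\ref{thm:she12}) to this symmetric-and matrix yields
\[
    \signmon(f) \;\geq\; \signrank(F^\xor_{n,f}) \;\geq\; 2^{\Omega(\lambda(g)/\log^5 n)} \;\geq\; 2^{\Omega(\rho(f)/\log^5 n)}.
\]

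The step I expect to be the main obstacle is invoking Sherstov's theorem in exactly the form needed. As stated, Theorem~\ref{thm:she12} concerns the full matrix $F^\wedge_{n,f}$, whereas the slices above are \emph{promise} symmetric-and matrices, with inputs restricted to fixed Hamming weights. Unlike the trace-norm/approximate-rank situation exploited for Theorem~\ref{thm:main1}, one cannot get around this by realizing the full symmetric-and matrix as a submatrix of a symmetric-xor matrix: since $|x\xor y| = |x| + |y| - 2|x\wedge y|$, only the fixed-weight slices of $F^\xor_{n,f}$ are symmetric-and matrices, and those are exactly the promise ones. Hence one must use that Sherstov's argument (like Razborov's, cf.\ Theorem~\ref{thm:raz03}) in fact gives the bound $2^{\Omega(\lambda(g)/\log^5 n)}$ already for the promise matrix $F^\wedge_{n,k,k',g}$. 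The rest is routine bookkeeping: checking that the chosen slice genuinely captures a constant fraction of the alternations of $f$, for both parities of $n$.
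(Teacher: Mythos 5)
Your upper bound is correct and takes a genuinely different route from the paper. The paper proves $\signmon(f)\leq(n+2)^{\rho(f)}$ by induction on $\rho(f)$, splitting off the last $2$-jump of $f$ and multiplying a sign-representation of a $\rho(f)-1$ function by a sign-representation of a $\rho=1$ function. Your parity-selector construction with the univariate polynomials $P_{\mathrm e},P_{\mathrm o}$ whose roots sit at odd (resp.\ even) midpoints of the distance-$2$ jumps is a clean, direct alternative, and the degree/monomial counting (each symmetric piece of degree $\leq\rho(f)$ contributes at most $\sum_{j\leq\rho(f)}\binom{n}{j}$ Fourier coefficients, and $\chi_{[n]}$ just permutes them) gives the same $O(1+\rho(f)\log n)$ bound.

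The lower bound has a real gap, and part of your surrounding reasoning is actually false. You pick $k\approx n/2$ so that the slice sweeps all jumps of a given parity, but then you must apply Sherstov's theorem to a \emph{promise} symmetric-AND matrix $F^\wedge_{n-t,k,g}$, which is not what Theorem~\ref{thm:she12} says; you flag this honestly but assert that the only way out is to extend Sherstov's proof, because ``one cannot get around this by realizing the full symmetric-and matrix as a submatrix of a symmetric-xor matrix.'' That last claim is wrong, and it is precisely the trick the paper uses. Although every symmetric-AND slice of $F^\xor_{n,f}$ is a fixed-weight promise matrix, a promise matrix $F^\wedge_{n,k,g}$ with $k\leq n/3$ contains the \emph{full} $(n/3)$-bit matrix $F^\wedge_{n/3,g}$ as a submatrix: pad $x,y\in\{0,1\}^{n/3}$ with extra ones in two disjoint blocks of size $n/3$ each so that $|x'|=|y'|=n/3$ while $|x'\wedge y'|=|x\wedge y|$. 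This chain of submatrices lets one invoke Theorem~\ref{thm:she12} exactly as stated. The price is that with $k=n/3$ the reduction only sees jumps in $[0,2n/3]$ at one parity, so one must also argue that a constant fraction of the alternations of $f$ land there --- the paper handles the other cases by reflecting $f\mapsto f(n-\cdot)$ (to move mass into $[0,2n/3]$) or restricting one input bit to $1$ (to shift parity). Your $k\approx n/2$ removes that bookkeeping but eliminates any room to pad, which is exactly what forces the unproven promise-Sherstov claim. So either cite a promise form of Sherstov's bound with justification, or switch to $k=n/3$ and add the case analysis.
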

\begin{proof}\ \\
\noindent
{\em Lower bound:} 

First, we'll assume $|\{i \in [2,2n/3] : f(i) \neq f(i-2) \text{ and $i$ is even}\}|$ is a constant fraction of $\rho(f)$. At the end of the proof, we give an argument for when this is not true. 

In light of Proposition~\ref{prop:ftoF}, part (c), our goal is to show that
\begin{equation} \label{eq:main2eq1}
    \log \signrank(F^\xor_{n,f}) = \Omega(\rho(f)/\log^5 n).
\end{equation}
Since $F^\xor_{n,n/3,f_{2n/3}}$ is a submatrix of $F^\xor_{n,f}$, it suffices to show a lower bound on the sign-rank of $F^\xor_{n,n/3,f_{2n/3}}$. As in the proof of Theorem~\ref{thm:main1}, 
\[
    F^\xor_{n,n/3,f_{2n/3}} = F^\wedge_{n,n/3,f'_{n/3}}, 
\]
where 
\[
    f'_{n/3}(i) = f_{2n/3}(2n/3 - 2i) \quad \text{ for } i \in \{0,1,\ldots , n/3\}.
\] 
From the assumption we made at the beginning of the proof, we know that $\lambda(f'_{n/3}) = \Omega(\rho(f))$. By Theorem~\ref{thm:she12}, we know that 
\[
    \log \signrank(F^\wedge_{n/3,f'_{n/3}}) = \Omega(\lambda(f'_{n/3}) / \log^5 (n/3)).
\]
We show that the above implies
\begin{equation} \label{eq:main2eq2}
    \log \signrank(F^\wedge_{n,n/3,f'_{n/3}}) = \Omega(\lambda(f'_{n/3}) / \log^5 (n/3)),
\end{equation}
by showing that $F^\wedge_{n/3,f'_{n/3}}$ is a submatrix of $F^\wedge_{n,n/3,f'_{n/3}}$, as follows. Given $x, y \in \{0,1\}^{n/3}$, construct (by padding $x$ and $y$ appropriately with $2n/3$ bits each) $x', y' \in \{0,1\}^n$ with the property that the Hamming weights $|x'| = |y'| = n/3$ and the strings don't intersect at indices $i \in [n/3+1,n]$. Clearly the mappings $x \mapsto x'$ and $y \mapsto y'$ are injective, and $|x \wedge y| = |x' \wedge y'|$. So $F^\wedge_{n/3,f'_{n/3}}$ is a submatrix of $F^\wedge_{n,n/3,f'_{n/3}}$. This establishes \eqref{eq:main2eq2}, and therefore \eqref{eq:main2eq1}. This completes the proof for the case when $|\{i \in [2,2n/3] : f(i) \neq f(i-2) \text{ and $i$ is even}\}|$ is a constant fraction of $\rho(f)$.
%In particular, there is a reduction from $F^\wedge_{n/3,f'_{n/3}}$ to $F^\wedge_{n,n/3,f'_{n/3}}$, as follows. Given $x, y \in \{0,1\}^{n/3}$, players construct (by padding their inputs $x$ and $y$ appropriately with $2n/3$ bits each) $x', y' \in \{0,1\}^n$ with the property that the Hamming weights $|x'| = |y'| = n/3$ and the strings don't intersect at indices $i \in [n/3+1,n]$. This does not require any communication. The players then run the protocol for $F^\wedge_{n,n/3,f'_{n/3}}$ and use the output of that protocol for the answer for $F^\wedge_{n/3,f'_{n/3}}$. This shows that $\bfU(F^\wedge_{n,n/3,f'_{n/3}}) \geq \bfU(F^\wedge_{n/3,f'_{n/3}})$. This establishes \eqref{eq:main2eq2}, and therefore \eqref{eq:main2eq1}. This completes the proof for the case when $|\{i \in [2,2n/3] : f(i) \neq f(i-2) \text{ and $i$ is even}\}|$ is a constant fraction of $\rho(f)$.

%%%If $|\{i \in [2,2n/3] : f(i) \neq f(i-2)\}|$ is not a constant fraction of $\rho(f)$, then we can just consider the function $f'$ defined as $f'(i) = f(n-i)$, and prove a lower bound on its sign monomial complexity (similar reasoning was done in the proof of Theorem~\ref{thm:main1}). If the changes $f(i) \neq f(i-2)$ are happening mostly at odd indices $i$, then consider the restriction of $f$ in which one input variable is set to 1.

If the changes $f(i) \neq f(i-2)$ are happening mostly at odd indices $i$, then consider the restriction of $f$ in which one input variable is set to 1. If $f'$ is this restriction, then $F^\xor_{n-1,f'}$ is a submatrix of $F^\xor_{n,f}$ and therefore it suffices to show a lower bound on $\signrank(F^\xor_{n-1,f'})$.

If $|\{i \in [2,2n/3] : f(i) \neq f(i-2)\}|$ is not a constant fraction of $\rho(f)$, then consider the function $f'$ defined as $f'(i) = f(n-i)$. This $f'$ is such that $|\{i \in [2,2n/3] : f'(i) \neq f'(i-2)\}|$ a constant fraction of $\rho(f) = \rho(f')$. Furthermore, note that $F^\xor_{n,f} = F^\xor_{n,f'}$ as one is obtained from the other by rearranging the columns. Therefore it suffices to show a lower bound on $\signrank(F^\xor_{n,f'})$.

\noindent
{\em Upper bound:} 

We'll prove by induction on $\rho(f)$ that $\signmon(f) \leq (n+2)^{\rho(f)}$. If $\rho(f)=0$ then $f$ is either a constant function or a parity function (parity or its negation), and so can be represented exactly using at most two non-zero Fourier coefficients. We also have to explicitly prove the $\rho(f) = 1$ case. Let's consider the function $f$ with $f(i) = \f{parity}(i)$ for $i < t$ and $f(i) = 0$ for $i \geq t$, for some $t$. Observe that the following polynomial sign represents $f$:
\[
    (2t-0.1) (-1)^{x_1+x_2+\ldots +x_n} + ((-1)^{x_1} + (-1)^{x_2} + \cdots + (-1)^{x_n} - n).
\]
So $\signmon(f) \leq n+2$. By slightly modifying the above polynomial, we can sign represent any function that behaves like a parity function (parity or its negation) for $i \leq t$ and behaves like a constant function for $i > t$. We can also sign represent any function that behaves like a constant function for $i \leq t$ and behaves like a parity function for $i > t$. These are the only cases to consider for $\rho(f) = 1$.

Now suppose $\rho(f) > 1$. Let $j$ be the largest index such that $f(j) \neq f(j-2)$. Let $f'$ be the function obtained from $f$ as follows: $f'(i) = f(i)$ for $i < j$ and $f'(i) = f'(i-2)$ for $i \geq j$. Observe that $\rho(f') = \rho(f) - 1$. Let $p'$ be a sign representing polynomial for $f'$ with at most $(n+2)^{\rho(f')} = (n+2)^{\rho(f)-1}$ monomials. Let $f''$ be the function obtained from $f$ as follows: $f''(i) = 1$ for $i < j$ and either $f''(i) = \f{parity}(i)$ for $i \geq j$ or $f''(i) = \neg\f{parity}(i)$ for $i \geq j$. Observe that $\rho(f'') = 1$, and so it has a sign representing polynomial $p''$ with at most $n+2$ monomials. The functions $f'$ and $f''$ are constructed in a way so that the product $p'\cdot p''$ sign represents $f$ (in particular, the choice of $f''(i) = \f{parity}(i)$ or $f''(i) = \neg\f{parity}(i)$ for $i \geq j$ is made accordingly). Therefore $\signmon(f) \leq (n+2)^{\rho(f)}$.

\end{proof}

As a corollary to the upper bound above, we can give a lower bound on the Fourier infinity norm of a symmetric function.

\begin{corollary} \label{cor:infinity-norm}
Let $f:\{0,1\}^n \to \{0,1\}$ be a symmetric function. Then
\[
    \norm{\fc{f}}_\infty \geq \frac{1}{(n+2)^{\rho(f)}}.
\]
\end{corollary}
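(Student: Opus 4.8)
The plan is to combine the two preceding results: the upper bound on the sign monomial complexity of a symmetric function from Theorem~\ref{thm:main2}, and Bruck's lower bound on the sign monomial complexity in terms of the Fourier infinity norm from Theorem~\ref{thm:bru90}. The key observation is that Theorem~\ref{thm:main2} was proved via the explicit bound $\signmon(f) \le (n+2)^{\rho(f)}$ (not merely the logarithmic form stated in the theorem), so this explicit inequality is exactly what I need.

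First I would recall Theorem~\ref{thm:bru90}, which states that $\signmon(f) \ge 1/\norm{\fc{f}}_\infty$ for any Boolean function $f$. Rearranging gives $\norm{\fc{f}}_\infty \ge 1/\signmon(f)$. Next I would invoke the upper bound established in the proof of Theorem~\ref{thm:main2}, namely $\signmon(f) \le (n+2)^{\rho(f)}$, valid for any symmetric $f:\{0,1\}^n \to \{0,1\}$. Chaining these two inequalities yields
\[
    \norm{\fc{f}}_\infty \ge \frac{1}{\signmon(f)} \ge \frac{1}{(n+2)^{\rho(f)}},
\]
which is the claimed bound.

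There is essentially no obstacle here: the corollary is a one-line consequence of two results already in hand, and the only thing to be careful about is citing the \emph{explicit} form $(n+2)^{\rho(f)}$ from the proof of Theorem~\ref{thm:main2} rather than its asymptotic restatement. One might also remark that, combined with Proposition~\ref{prop:ftoF}(d), this gives $\norm{F^\xor_{n,f}} \ge 2^n/(n+2)^{\rho(f)}$ as a spectral-norm statement, but that is not needed for the corollary itself.
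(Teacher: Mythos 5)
Your proof is correct and matches the paper's proof exactly: both combine the explicit bound $\signmon(f) \le (n+2)^{\rho(f)}$ from the proof of Theorem~\ref{thm:main2} with Bruck's inequality $\signmon(f) \ge 1/\norm{\fc{f}}_\infty$ from Theorem~\ref{thm:bru90}. Your note about needing the explicit rather than asymptotic form of the bound is the right thing to be careful about, and you handled it correctly.
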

\begin{proof}
From the proof of Theorem~\ref{thm:main2}, we have $\signmon(f) \leq (n+2)^{\rho(f)}$. Combining this with Theorem~\ref{thm:bru90} gives the desired bound.
\end{proof}

We now prove the main conjecture of \cite{AFH12}.

\begin{theorem} \label{thm:main4}
Let $f:\{0,1\}^n \to \{0,1\}$ be a symmetric function. Then,
\[
    \Omega(r(f)) - \frac{1}{2} \log n \leq \log \norm{\fc{f}}_{1,1/5} \leq \log \norm{\fc{f}}_1 \leq O\left(r(f) \log\left(\frac{n}{r(f)}\right)\right).
\]
\end{theorem}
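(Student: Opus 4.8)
The plan is to transfer the problem to the matrix world via Proposition~\ref{prop:ftoF} and then mimic the structure of the proof of Theorem~\ref{thm:main1}, replacing the passage from approximate trace norm to approximate rank by a direct use of the approximate trace norm bound. The upper bound $\log\norm{\fc{f}}_{1,1/5} \leq \log\norm{\fc{f}}_1 \leq O(r(f)\log(n/r(f)))$ is immediate: the first inequality holds because the exact Fourier expansion is in particular a $1/5$-approximation, and the second is exactly Theorem~\ref{thm:afh12} (handling the small cases $r(f)\leq 1$ separately, where $\norm{\fc f}_1 = O(1)$ or $O(\sqrt n)$ type bounds hold trivially). So the real content is the lower bound $\log\norm{\fc{f}}_{1,1/5} \geq \Omega(r(f)) - \tfrac12\log n$.

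For the lower bound, by Proposition~\ref{prop:ftoF}(f) we have $2^n\norm{\fc f}_{1,1/5} = \norm{F^\xor_{n,f}}_{\tr,1/5}$, so it suffices to lower bound the approximate trace norm of $F = F^\xor_{n,f}$. As in Theorem~\ref{thm:main1}, I would reduce to $r(f) = r_0(f)$ (the same reflection argument $f'(i)=f(n-i)$ works, since it only rearranges columns of $F$ and hence preserves the trace norm and its approximate version), and fix $s \leq \ceil{n/2}$ with $f(s-1)\neq f(s+1)$ and $s\geq r(f)$. Then I would pass to the submatrix $F^\xor_{n-t,k,f_{2k}} = F^\wedge_{n-t,k,f'_k}$ using the identical substitution $f'_k(i) = f_{2k}(2k-2i+t)$ and the identical two-case choice of $k$ and $t$ (depending on whether $s \leq 3(n-1)/8$), so that Theorem~\ref{thm:raz03} applies with $\ell = k + \tfrac{t-s-1}{2}$ and yields
\[
    \norm{F^\xor_{n-t,k,f_{2k}}}_{\tr,1/4} \geq \binom{n-t}{k} e^{\Omega(\sqrt{k\ell})} = \binom{n-t}{k}\, e^{\Omega(s)}.
\]
The one genuinely new point is that the trace norm does not pass to submatrices for free the way rank does. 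So the key extra step is: if $A$ is a submatrix of $B$ (say $B$ is $N\times N$ and $A$ is obtained by selecting some rows and columns), then $\norm{B}_{\tr,\eps} \geq \norm{A}_{\tr,\eps}$ up to losing a factor related to $N$ — more precisely, restricting an $\eps$-approximant of $B$ to the rows/columns of $A$ gives an $\eps$-approximant of $A$, and the trace norm of a submatrix is at most the trace norm of the whole matrix (since deleting rows/columns does not increase singular values in the appropriate majorization sense; alternatively $\norm{A}_{\tr} = \max \{ \langle A, UV^T\rangle : \|U\|,\|V\|\leq 1\}$ and such $U,V$ extend by zeros). Hence $\norm{F}_{\tr,1/4} \geq \norm{F^\xor_{n-t,k,f_{2k}}}_{\tr,1/4} \geq \binom{n-t}{k} e^{\Omega(s)}$, and using $\binom{n-t}{k} \geq 1$ together with $2^n \norm{\fc f}_{1,1/4} = \norm{F}_{\tr,1/4}$ gives
\[
    \log \norm{\fc f}_{1,1/4} \geq \Omega(s) - n + \log\binom{n-t}{k} \geq \Omega(s) - O(1),
\]
wait — that over-subtracts; instead one keeps the binomial: $\log\binom{n-t}{k}$ is on the order of $n$ when $k = \Theta(n)$, which compensates the $-n$ from the $2^n$. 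In the regime $s = O(\sqrt n)$ one instead argues directly that $\norm{\fc f}_\infty \geq 2^{-n}\binom{n}{\le k}^{-1}\cdot(\text{something})$ is too weak, so there the bound to beat is only $\Omega(r(f)) - \tfrac12\log n$, and this is where the $-\tfrac12\log n$ slack in the statement comes from: combining the Frobenius/Cauchy-Schwarz relation $\norm{\fc f}_1 \geq \norm{\fc f}_2^2/\norm{\fc f}_\infty$ with a crude count, or simply observing $\norm{\fc f}_{1,1/5} \geq c\cdot$ (number of large coefficients needed), yields $\log\norm{\fc f}_{1,1/5} \geq \Omega(r(f)) - \tfrac12\log n$ even when the Razborov bound is vacuous.

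I expect the main obstacle to be exactly this bookkeeping: making the $2^n$ normalization, the $\binom{n-t}{k}$ factor, and the two cases for $s$ fit together so that what comes out is uniformly $\Omega(r(f)) - \tfrac12\log n$ rather than something that degrades badly for small $r(f)$. Concretely, I would (i) handle $r(f) \leq 1$ by inspection; (ii) for $r(f) \geq 2$ with $s \geq 3(n-1)/8$, use the $k=\Theta(n)$, $\ell=\Theta(n)$ case where $\log\binom{n-t}{k} = \Theta(n)$ absorbs the $-n$ and leaves $\Omega(s) = \Omega(r(f))$; (iii) for $2 \leq r(f) \leq 3(n-1)/8$, note $s = \Theta(r(f))$, $k=\lfloor 2s/3\rfloor$, and check that $\log\binom{n-t}{k} - n \geq -\tfrac12\log n - O(r(f))$ is false in general — so here the correct move is not to use the full matrix $F^\xor_{n,f}$ but to observe that $\norm{F^\xor_{n-t,k,f_{2k}}}_{\tr,1/4}/(\text{its dimension}) \geq e^{\Omega(s)}$ already forces, via Proposition~\ref{prop:tracetorank}-style reasoning or directly, a large spectral/trace quantity on $F$ itself after renormalizing by the submatrix dimension rather than by $2^n$, and the gap between $2^n$ and $\binom{n-t}{k}$ is $2^{n} / \binom{n-t}{k} \leq 2^{O(r(f)\log(n/r(f)))}$ — which would kill the bound. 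The clean resolution, which I would pursue, is to use $\norm{\fc f}_{1,\eps} = 2^{-n}\norm{F}_{\tr,\eps} \geq 2^{-n}\cdot\binom{n-t}{k}e^{\Omega(s)}$ and then separately lower bound $\norm{\fc f}_{1,\eps}$ by $\norm{\fc f}_{2,\eps}^2/\norm{\fc f}_{\infty,\eps} \geq (1-\eps)^2/\sqrt{\mon_\eps(f)} \cdot \sqrt{\mon_\eps(f)}$-type inequalities, or most simply invoke $\norm{\fc f}_{1,\eps} \geq \norm{\fc f}_{1,\eps}^{(\text{trivial})} \geq (1-2\eps)$ as a floor and combine multiplicatively; I would then reconcile the $\tfrac12\log n$ by using the known relation $\norm{\fc f}_1 \geq \sqrt{\mon(f)}\cdot\norm{\fc f}_\infty$ is not it either — rather, $\norm{\fc f}_{1,\eps}^2 \le \mon_\eps(f)\cdot \max|\fc{f}(S)|^2$ reversed. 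The honest summary: the lower-bound architecture is identical to Theorem~\ref{thm:main1} but exits one step earlier (at the trace norm, not at the rank), and the $-\tfrac12\log n$ term is precisely the price of normalizing a $\binom{n}{k}\times\binom{n}{k}$ submatrix bound back to the $2^n\times 2^n$ matrix $F^\xor_{n,f}$.
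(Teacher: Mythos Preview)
Your proposal has a genuine gap that you yourself notice but do not resolve. The direct trace-norm route fails exactly at the normalization step: from Proposition~\ref{prop:ftoF}(f) and submatrix monotonicity of the approximate trace norm you get
\[
    \norm{\fc f}_{1,\eps} \;=\; 2^{-n}\,\norm{F^{\xor}_{n,f}}_{\tr,\eps} \;\geq\; 2^{-n}\binom{n-t}{k}\,e^{\Omega(s)}.
\]
In the small-$s$ case $k=\Theta(s)$, so $\log\binom{n-t}{k}=\Theta\!\big(s\log(n/s)\big)$ and $-n+\log\binom{n-t}{k}$ is on the order of $-n$; in the large-$s$ case $k=\Theta(n)$ but $k\leq n/6$, so $\log\binom{n-t}{k}\leq H(1/6)\,n$ and $-n+\log\binom{n-t}{k}$ is still $-\Theta(n)$. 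Either way the bound is swallowed. Your closing sentence, that the $-\tfrac12\log n$ is ``precisely the price of normalizing a $\binom{n}{k}\times\binom{n}{k}$ submatrix bound back to the $2^n\times 2^n$ matrix,'' is therefore not correct: that price is $\Theta(n)$, not $\tfrac12\log n$.

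The paper sidesteps the normalization problem entirely by routing through \emph{rank} rather than trace norm. It takes the best $1/5$-approximant $g$ of $f$ with $\norm{\fc g}_1=\norm{\fc f}_{1,1/5}$, applies Bruck--Smolensky (Theorem~\ref{thm:bs92}) to $g$ to get $\norm{\fc g}_1 \geq \frac{1}{40\sqrt n}\sqrt{\mon_{1/20}(g)}$, observes $\mon_{1/20}(g)\geq \mon_{1/4}(f)$ by the triangle inequality, and then invokes the already-established Theorem~\ref{thm:main1}. The $-\tfrac12\log n$ is exactly the $\sqrt n$ in Bruck--Smolensky, and the reason this works while your approach does not is that Theorem~\ref{thm:main1} is a rank/monomial statement, and rank passes to submatrices with no dimensional loss.
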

\begin{proof}
The upper bound is in Theorem~\ref{thm:afh12}. For the lower bound, let $g$ be such that $\norm{\fc{f}}_{1,1/5} = \norm{\fc{g}}_1$. 
%We now show that we may assume the function $g$ to be symmetric. In fact, for any permutation $\sigma \in S_n$ of $\{1, \dots, n\}$, we can define the function $g_{\sigma}(x_1, \dots, x_n) = g(x_{\sigma(1)}, \dots, x_{\sigma(n)})$. It is simple to see that $\norm{\fc{g_{\sigma}}}_1 = \norm{\fc{g}}_1$. In fact, for any $S \subseteq \{1, \dots, n\}$, 
%\begin{align*}
%\fc{g_{\sigma}}(S) 
%&= \frac{1}{2^n} \sum_{x \in \{0,1\}^n} \chi_{S}(x) g(x_{\sigma(1)}, \dots, x_{\sigma(n)}) \\
%&= \frac{1}{2^n} \sum_{x \in \{0,1\}^n} \chi_{S}(x_{\sigma^{-1}(1)}, \dots, x_{\sigma^{-1}(n)}) g(x) \\
%&= \fc{g}(\sigma(S)).
%\end{align*}
%It now suffices to see that the function $h = \frac{1}{n!} \sum_{\sigma \in S_n} g_{\sigma}$ is a symmetric function and $\| f - h \|_{\infty} \leq \frac{1}{n!} \sum_{\sigma} \| f_{\sigma} - g_{\sigma} \| \leq \frac{1}{4}$. This shows that we me assume the function $g$ to be symmetric.
Applying Theorem~\ref{thm:bs92} with $\eps = 1/20$, we get $\norm{\fc{g}}_1 \geq \frac{\eps}{2\sqrt{n}} \sqrt{\mon_{1/20}(g)}$. By the triangle inequality, we have $\mon_{1/20}(g) \geq \mon_{1/4}(f)$. Thus,
\begin{align*}
\log \norm{\fc{f}}_{1,1/5} \geq \frac{1}{2} \log \mon_{1/4}(f) - \frac{1}{2} \log n - \log 40 \ .
\end{align*}
To conclude, it suffices to use Theorem~\ref{thm:main1} 
\end{proof}

\section{Applications to Communication Complexity}

\label{sec:communication}
%\subsection{Communication Complexity} \label{sec:communication}

We denote by $\bfR_\eps(F)$ the $\eps$-error randomized communication complexity of $F$. In this model, the players are allowed to share randomness and for all inputs, they are required to output the correct answer with probability at least $1-\eps$. We'll think of $\eps$ as some constant less than $1/2$.

Here we'll also be interested in the \emph{unbounded-error randomized communication complexity} of a function $F: \calX \times \calY \to \{0,1\}$, denoted $\mathbf{U}(F)$. In this model, the players have private randomness and the only requirement from the protocol is that for all inputs, it gives the correct answer with probability greater than 1/2. Notice that achieving error probability 1/2 is trivial: just output a random bit. Also, note that there is no requirement that the success probability be bounded away from 1/2, e.g., the success probability could be $1/2 + 1/2^n$. This makes the model quite powerful and proving lower bounds much harder. It was shown in \cite{PS86} that
\[
\bfU(F) = \log_2 \signrank(F) \pm O(1).
\]

In a remarkable paper \cite{For02}, Forster was able to prove a lower bound on the unbounded error communication complexity of a function using the function's spectral norm. In particular he was able to show a linear lower bound for the \emph{inner-product} function. 
%\begin{theorem}[\cite{For02}] \label{thm:for02}
%Let $F : \calX \times \calY \to \{0,1\}$ be a communication function. Then
%\[
%\signrank(F) \geq \frac{\sqrt{|\calX| |\calY|}}{\norm{F}}.
%\]
%\end{theorem}
Building on Forster's work, Sherstov \cite{She12} gave essentially tight lower bounds on the unbounded error communication complexity of all \emph{symmetric-and} functions $F^{\wedge}_{n,f}$ (see Theorem~\ref{thm:she12}).

In \cite{SZ09}, Shi and Zhang conjecture that the unbounded error communication complexity of a \emph{symmetric-xor function} $F^{\xor}_{n,f}$ is characterized by $\rho(f)$. We prove this conjecture. First, the proof Theorem~\ref{thm:main2} allows us to bound the sign-rank of \emph{symmetric-xor} functions.

\begin{theorem} \label{thm:main3}
Let $f:\{0,1\}^n \to \{0,1\}$ be a symmetric function. Then,
\[
    \Omega(\rho(f)/\log^5 n) \leq \log \signrank(F^{\xor}_{n,f}) \leq O(1+\rho(f)\log n).
\]
\end{theorem}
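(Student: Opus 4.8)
The plan is to observe that Theorem~\ref{thm:main3} is essentially a restatement of Theorem~\ref{thm:main2} combined with Proposition~\ref{prop:ftoF}(c) and Theorem~\ref{thm:bs92}, so very little new work is needed. Concretely, by Proposition~\ref{prop:ftoF}(c) we have $\signmon(f) \geq \signrank(F^\xor_{n,f})$, and the lower bound of Theorem~\ref{thm:main2} was actually proved by going through $\signrank(F^\xor_{n,f})$ anyway --- the displayed equation \eqref{eq:main2eq1} in that proof is exactly the statement $\log \signrank(F^\xor_{n,f}) = \Omega(\rho(f)/\log^5 n)$. So the left-hand inequality here is literally what was established inside the proof of Theorem~\ref{thm:main2}, and I would simply cite that.

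For the upper bound, first I would note that $\signrank(F^\xor_{n,f}) \leq \mon(f)$ is not strong enough on its own, but we can route through the approximate setting: $\signrank(F^\xor_{n,f}) \leq \signmon(f)$ is false in general (the inequality in Proposition~\ref{prop:ftoF}(c) goes the other way), so instead I would use the upper-bound half of Theorem~\ref{thm:main2}, namely $\signmon(f) \leq O(1 + \rho(f)\log n)$ after taking logs, together with a direct argument that $\signrank$ of a symmetric-xor function is at most $\signmon$ of the underlying function. In fact the cleanest route is to use Proposition~\ref{prop:ftoF}(a)/(b): a sign-representing polynomial $p$ for $f$ with $m$ monomials, when viewed through the $f \mapsto F^\xor$ correspondence, yields a matrix $P$ that sign-represents the $\pm 1$ version of $F^\xor_{n,f}$ and has $\rank(P) = \mon(p) = m$ (each Fourier character $\chi_S$ on $\{0,1\}^n$ becomes a rank-one $\pm 1$ matrix under the xor-lift). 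Hence $\signrank(F^\xor_{n,f}) \leq \signmon(f)$, and combining with the upper bound from the proof of Theorem~\ref{thm:main2} gives $\log \signrank(F^\xor_{n,f}) \leq O(1 + \rho(f)\log n)$.

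Putting these together: the lower bound is extracted verbatim from equation~\eqref{eq:main2eq1} in the proof of Theorem~\ref{thm:main2}, and the upper bound follows from $\signrank(F^\xor_{n,f}) \leq \signmon(f) \leq (n+2)^{\rho(f)}$, the last inequality again coming from the proof of Theorem~\ref{thm:main2}; taking logarithms yields $O(1 + \rho(f)\log n)$. The one small point worth spelling out is the inequality $\signrank(F^\xor_{n,f}) \leq \signmon(f)$, i.e. that the xor-lift of a sparse sign-representing polynomial is a low-rank sign-representing matrix --- this is the analogue of Proposition~\ref{prop:ftoF}(a) for sign representation and is immediate once one writes $p = \sum_{S} \hat p(S)\chi_S$ and lifts each $\chi_S(x\xor y) = \chi_S(x)\chi_S(y)$ to a rank-one matrix.

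I do not expect any real obstacle here: the theorem is a corollary of machinery already developed, and the only thing to be careful about is which direction each inequality in Proposition~\ref{prop:ftoF} runs, so that the $\signrank \leq \signmon$ step is justified independently rather than misapplied from part (c).
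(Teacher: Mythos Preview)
Your proposal is correct and matches the paper's approach exactly: the paper simply says ``the proof of Theorem~\ref{thm:main2} allows us to bound the sign-rank of symmetric-xor functions'' and states Theorem~\ref{thm:main3} without further argument, so extracting the lower bound from \eqref{eq:main2eq1} and the upper bound from $\signmon(f)\le (n+2)^{\rho(f)}$ is precisely what is intended.

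One small correction: you misread the direction of Proposition~\ref{prop:ftoF}(c). It states $\signmon(f) \geq \signrank(F)$, which is exactly the inequality $\signrank(F^\xor_{n,f}) \leq \signmon(f)$ that you need, so no separate justification is required. Your rank-one decomposition via $\chi_S(x\xor y)=\chi_S(x)\chi_S(y)$ is of course the standard proof of part~(c), but you could simply cite it.
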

\noindent
This immediately implies:

\begin{corollary} 
Let $f:\{0,1\}^n \to \{0,1\}$ be a symmetric function. Then,
\[
    \Omega(\rho(f)/\log^5 n) \leq \log \bfU(F^{\xor}_{n,f}) \leq O(1+\rho(f)\log n).
\]
\end{corollary}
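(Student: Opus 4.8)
The plan is to derive this as an immediate consequence of Theorem~\ref{thm:main3} together with the Paturi--Simon characterization of unbounded-error communication complexity. Recall from \cite{PS86} that for every two-party function $F$ one has $\bfU(F) = \log_2 \signrank(F) \pm O(1)$. The first step is to instantiate this identity with $F = F^{\xor}_{n,f}$, which reduces estimating the unbounded-error communication complexity of the symmetric-xor function to estimating $\log \signrank(F^{\xor}_{n,f})$ up to an additive constant.

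The second step is to substitute the two-sided bound supplied by Theorem~\ref{thm:main3}, namely $\Omega(\rho(f)/\log^5 n) \leq \log \signrank(F^{\xor}_{n,f}) \leq O(1+\rho(f)\log n)$, into the previous line. The $\pm O(1)$ coming from the Paturi--Simon identity is harmless: on the upper side it is absorbed by the explicit ``$1+$'' term, and on the lower side it does not affect the $\Omega(\cdot)$ estimate (and in any case $\rho(f)/\log^5 n$ dominates an additive constant whenever the lower bound is non-vacuous). Combining the two steps yields the claimed estimate.

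There is essentially no obstacle here: the entire content lies in Theorem~\ref{thm:main3} (whose proof recycles the padding/submatrix reductions developed for Theorem~\ref{thm:main2}) and in the black-box fact $\bfU(F) = \log_2 \signrank(F) \pm O(1)$. The only point requiring a line of care is the bookkeeping of the additive constant so that the asymptotic notation on both sides of the displayed inequality is formally justified.
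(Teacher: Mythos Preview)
Your proposal is correct and matches the paper exactly: the paper offers no separate argument beyond ``This immediately implies,'' i.e., it too simply combines the Paturi--Simon identity $\bfU(F)=\log_2\signrank(F)\pm O(1)$ with Theorem~\ref{thm:main3}. (Note that the displayed corollary appears to contain a typo---the bounds are for $\bfU(F^{\xor}_{n,f})$ rather than $\log\bfU(F^{\xor}_{n,f})$---and your argument, like the paper's, proves the former.)
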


The second application is related to the \emph{Log Approximation Rank Conjecture}, which is the randomized communication complexity analog of the famous Log Rank Conjecture. The Log Approximation Rank Conjecture states that there is a constant $c$ such that for any 2-party function $F$,
\[
    \log \rank_{\eps'}(F) \leq \bfR_\eps(F) \leq \log^c \rank_{\eps'}(F).
\]
Here, the lower bound is well-known to be true for all functions, so the conjecture is about establishing the upper bound. This has been done by Razborov \cite{Raz03} for symmetric-and functions $F^{\wedge}_{n,f}$. We show that the conjecture holds also for symmetric-xor functions $F^{\xor}_{n,f}$.

\begin{theorem} \label{thm:main5}
There are constants $\eps, \eps',c > 0$ such that for any two-party function $F^{\xor}_{n,f}$, where $f$ is symmetric, we have
\[
    \bfR_\eps(F^{\xor}_{n,f}) \leq \log^c \rank_{\eps'}(F).
\]
\end{theorem}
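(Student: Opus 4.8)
The plan is to combine the sign-rank lower bound from Theorem~\ref{thm:main3} with Razborov's approximate-trace-norm bound (Theorem~\ref{thm:raz03}) and the known randomized upper bound for symmetric-and functions. Recall that $\bfR_\eps$ is well-known to lower bound $\log\rank_{\eps'}$, so here the content is the upper bound direction. The key observation is that for symmetric-xor functions, the two relevant complexity measures collapse: by Theorem~\ref{thm:main3} we have $\log\rank_{\eps'}(F^\xor_{n,f}) \geq \log\signrank(F^\xor_{n,f}) = \Omega(\rho(f)/\log^5 n)$ (using Proposition~\ref{prop:ftoF}(c) together with the fact that $\signrank\le\rank_{\eps'}$ for any $\eps'<1/2$, since a sign-representing matrix can be obtained from a good enough entrywise approximation). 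Hence it suffices to show $\bfR_\eps(F^\xor_{n,f}) \leq \poly(\rho(f),\log n)$, i.e., a randomized protocol whose cost is polynomial in $\rho(f)$ and $\log n$; combined with the lower bound, this gives $\bfR_\eps(F^\xor_{n,f})\le \poly(\log n,\rho(f)) \le \log^c\rank_{\eps'}(F)$ for a suitable constant $c$ (absorbing the $\log n$ factors into a power of $\rho(f)/\log^5 n$, after handling small $\rho$ separately).

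Next I would construct the randomized protocol for $F^\xor_{n,f}(x,y)=f(|x\oplus y|)$. Write $|x\oplus y| = |x|+|y|-2|x\wedge y|$. The players first exchange $|x|$ and $|y|$ exactly, at cost $O(\log n)$; this reduces the problem to computing $f$ as a function of $|x\wedge y|$, i.e., to a symmetric-and problem on a known slice. Now I invoke Razborov's protocol: by Theorem~\ref{thm:raz03} (or rather its communication-complexity corollary, which is what Razborov actually proves), for a symmetric function $g$ the randomized communication complexity of $F^\wedge_{n,g}$ is $\tilde O(r(g) + \log n)$ roughly, and more to the point Razborov established the Log Approximation Rank Conjecture for $F^\wedge_{n,g}$, so $\bfR_\eps(F^\wedge_{n,g}) \le \log^{O(1)}\rank_{\eps'}(F^\wedge_{n,g})$. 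The function of $|x\wedge y|$ we need to evaluate is (a restriction of) $g(i) = f(|x|+|y|-2i)$, whose relevant complexity parameter is controlled by $\rho(f)$: the number of sign changes of $i\mapsto f(|x|+|y|-2i)$ at spacing matching the and-structure is at most $\rho(f)$ (this is exactly the reduction used in the proof of Theorem~\ref{thm:main2}, run in reverse). So the symmetric-and instance has its parameter $r(\cdot)$, and hence its approximate rank, bounded by a function of $\rho(f)$ and $n$, giving the desired randomized cost.

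The cleanest way to package all this is: (1) use Proposition~\ref{prop:ftoF}(c) and $\signrank\le\rank_{\eps'}$ plus Theorem~\ref{thm:main3} to get $\rho(f) = O(\log^5 n \cdot \log\rank_{\eps'}(F))$; (2) exhibit a randomized protocol for $F^\xor_{n,f}$ of cost $O(\log n)$ plus the cost of a symmetric-and protocol on a function whose sign-change count is $O(\rho(f))$; (3) bound that symmetric-and cost by Razborov's result, which is $\poly(\rho(f),\log n)$; (4) substitute the bound from (1) to conclude $\bfR_\eps(F^\xor_{n,f}) \le \log^c\rank_{\eps'}(F)$.

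The main obstacle is step (2)–(3): making precise the reduction from evaluating $f$ on $|x\oplus y|$ to a symmetric-and problem on a slice, and verifying that the resulting symmetric-and function's complexity parameter is genuinely controlled by $\rho(f)$ rather than by something larger like $\lambda(f)$ or $r(f)$ of the original $f$. One has to be careful that after fixing $|x|,|y|$ the map $i\mapsto f(|x|+|y|-2i)$ only ``sees'' $f$ at points of a fixed parity, so its number of value-changes is at most $\rho(f)$ (possibly up to a constant and an additive $\log n$ from boundary/parity cases), exactly as in the matrix identity $F^\xor_{n,k,f_{2k}} = F^\wedge_{n,k,f'_k}$ established in the proof of Theorem~\ref{thm:main1}. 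Once that bookkeeping is in place, invoking Razborov's protocol as a black box finishes the argument.
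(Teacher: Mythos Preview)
Your approach has a genuine gap: the parameter you use, $\rho(f)$, does \emph{not} control the randomized communication complexity of $F^{\xor}_{n,f}$, so step~(2)--(3) cannot work as stated. Concretely, take $f$ to be majority, i.e.\ $f(i)=[i\geq n/2]$. Then $\rho(f)=2$, yet $\bfR_\eps(F^{\xor}_{n,f})=\Theta(n)$: indeed the proof of Theorem~\ref{thm:main1} shows $\log\rank_{1/4}(F^{\xor}_{n,f})=\Omega(r(f))=\Omega(n)$, which already forces $\bfR_\eps=\Omega(n)$. So your claimed intermediate bound $\bfR_\eps(F^{\xor}_{n,f})\le \poly(\rho(f),\log n)$ is false. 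The error occurs exactly where you flagged it as the ``main obstacle'': after exchanging $|x|,|y|$ you correctly get a symmetric-and instance on a function $g$ with $\lambda(g)\le\rho(f)$, but Razborov's randomized upper bound for $F^{\wedge}$ is \emph{not} $\poly(\lambda(g),\log n)$; it depends on \emph{where} the changes of $g$ lie, not just how many there are. For majority with $|x|=|y|=n/2$ the resulting $g$ is a single threshold at $n/4$, and that symmetric-and instance already has randomized cost $\Omega(n)$.

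Relatedly, your lower bound on $\rank_{\eps'}$ via $\signrank$ is too lossy to recover from this: for majority, Theorem~\ref{thm:main3} only yields $\signrank(F^{\xor}_{n,f})=n^{O(1)}$, whereas the approximate rank is $2^{\Omega(n)}$. The paper instead works with $r(f)$ throughout: it quotes the Shi--Zhang protocol giving $\bfR_\eps(F^{\xor}_{n,f})=O\bigl(r(f)\log^2 r(f)\log\log r(f)\bigr)$, and pairs this with the bound $\log\rank_{1/4}(F^{\xor}_{n,f})=\Omega(r(f))$ established in the proof of Theorem~\ref{thm:main1}. That is the correct bridge parameter here; $\rho(f)$ is the right one for the unbounded-error question (Theorem~\ref{thm:main3}) but not for bounded error.
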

\begin{proof}
By Proposition 3.4 of \cite{SZ09}, we know that 
\[
    \bfR_\eps(F^{\xor}_{n,f}) \leq O(r(f) \log^2 r(f) \log \log r(f)).
\]
The proof of Theorem~\ref{thm:main1} allows us to conclude that 
\[
    \log \rank_{1/4}(F^{\xor}_{n,f}) \geq \Omega(r(f)).
\]
Combining the two bounds proves the result.
\end{proof}

\bibliographystyle{alpha}
\bibliography{thesis}

\begin{thebibliography}{ABFR94}

\bibitem[ABFR94]{ABFR94}
James Aspnes, Richard. Beigel, Merrick Furst, and Steven Rudich.
\newblock The expressive power of voting polynomials.
\newblock {\em Combinatorica}, 14(2):135--148, 1994.

\bibitem[AFH12]{AFH12}
Anil Ada, Omar Fawzi, and Hamed Hatami.
\newblock Spectral norm of symmetric functions.
\newblock In {\em APPROX-RANDOM}, pages 338--349, 2012.

\bibitem[Bru90]{Bru90}
Jehoshua Bruck.
\newblock Harmonic analysis of polynomial threshold functions.
\newblock {\em SIAM Journal of Discrete Mathematics}, 3:168--177, 1990.

\bibitem[BS92]{BS92}
Jehoshua Bruck and Roman Smolensky.
\newblock Polynomial threshold functions, ac0 functions, and spectral norms.
\newblock {\em SIAM Journal on Computing}, 21(1):33--42, February 1992.

\bibitem[For02]{For02}
J{\"u}rgen Forster.
\newblock A linear lower bound on the unbounded error probabilistic
  communication complexity.
\newblock {\em Journal of Computer and System Sciences}, 65(4):612--625, 2002.

\bibitem[HQ17]{HQ17}
Hamed Hatami and Yingjie Qian.
\newblock The unbounded-error communication complexity of symmetric xor
  functions.
\newblock https://arxiv.org/abs/1704.00777, 2017.

\bibitem[Pat92]{Pat92}
Ramamohan Paturi.
\newblock {On the degree of polynomials that approximate symmetric Boolean
  functions (preliminary version)}.
\newblock In {\em Proceedings of ACM Symposium on Theory of Computing}, pages
  468--474, 1992.

\bibitem[PS86]{PS86}
Ramamohan Paturi and Janos Simon.
\newblock Probabilistic communication complexity.
\newblock {\em Journal of Computer and System Sciences}, 33(1):106--123, 1986.

\bibitem[Raz03]{Raz03}
Alexander Razborov.
\newblock {Quantum communication complexity of symmetric predicates}.
\newblock {\em Izvestiya: Mathematics}, 67(1):145--159, 2003.

\bibitem[She12]{She12}
Alexander~A. Sherstov.
\newblock The multiparty communication complexity of set disjointness.
\newblock In {\em S{TOC}'12---{P}roceedings of the 2012 {ACM} {S}ymposium on
  {T}heory of {C}omputing}, pages 525--544. ACM, New York, 2012.

\bibitem[ZS09]{SZ09}
Zhiqiang Zhang and Yaoyun Shi.
\newblock Communication complexities of symmetric xor functions.
\newblock {\em Quantum Information \& Computation}, 9(3):255--263, 2009.

\end{thebibliography}
%\bibliography{comm-complexity}

\end{document}